\newcommand{\id}{\ensuremath{\mathds{1}}}
\newcommand{\hil}{\mathcal{H}}
\newtheoremstyle{mystyle}
  {6pt}
  {6pt}
  {\normalfont}
  {0pt}
  {\bf}
  {.}
  { }
  {}
\theoremstyle{mystyle}
\newtheorem{definition}{Definition}
\newtheorem{corollary}{Corollary}
\newtheorem{observation}{Observation}
\colorlet{myPurple}{blue!40!red}
\colorlet{myCyan}{cyan!50!gray}
\definecolor{quantumviolet}{HTML}{53257F} 
\definecolor{quantumgray}{HTML}{555555} 
\definecolor{mygray}{gray}{0.95} 
\newtcolorbox[auto counter,number within=section]{boxfigure}[2][]{%
colback=mygray,colframe=myPurple,fonttitle=\bfseries,width=\columnwidth,float*=ht,lower separated=false, halign=justify,title=Box~\thetcbcounter: #2,#1}
\begin{document}

\nonfrenchspacing

\title{Certification of Quantum Networks using the generalised Choi isomorphism}

\author{Sophie Egelhaaf}
\affiliation{Department of Applied Physics, University of Geneva, Switzerland}

\author{Roope Uola}
\affiliation{Department of Applied Physics, University of Geneva, Switzerland}
\affiliation{Department of Physics and Astronomy, Uppsala University, Box 516, 751 20 Uppsala, Sweden}
\affiliation{Nordita, KTH Royal Institute of Technology and Stockholm University, Hannes Alfvéns väg 12, 10691 Stockholm, Sweden}

\date{\today}  

\begin{abstract}
We present a framework for certifying entanglement properties of quantum states and measurements in line networks. The framework is based on the generalised Choi isomorphism, which can be used to map bipartite states and measurement into corresponding quantum operations. We apply the method to networks with trusted end points to demonstrate the power of the approach. We derive bounds for common convex geometric entanglement quantifiers of individual source states, as well as for the network as a whole. We also apply the technique to certification of entanglement dimensionality, proposing the concept of Schmidt number for bipartite measurements in the process. We believe this quantifier can find interest in benchmarking detectors used, e.g., in qutrit teleportation. Applying our formalism further to high-dimensional networks, we derive an activation result for genuinely high-dimensional quantum steering.
\end{abstract}

\maketitle

\section{Introduction}

Quantum networks form a rapidly developing research field with applications varying from quantum communication to foundational tests of quantum mechanics and even to the quantum internet \cite{Wei_2022,Tavakoli_2022,RevModPhys.95.045006}. Networks provide new challenges and opportunities for information processing, such as the need for developing novel methods for characterizing non-convex sets of correlations, as well as the possibility of extracting randomness without inputs.

With complex network structures, it becomes necessary to develop techniques for characterizing the quantum properties that a given network possesses. The aim can, for example, be to characterise the possible quantum advantage a given network can reach \cite{SekatskiBoreiri2023}, or the set of quantum states achievable with a given network structure \cite{Kraft_2021}. For practical purposes, it is imperative to certify that all the nodes and vertices of a network are functioning up to the standards required for, e.g., quantum communication.
 
In this work, we focus on the certification of networks that have untrusted nodes. This could be the case, for example, in a communication network, where some nodes are inaccessible to the user, or where some parties might wish to act maliciously. We focus mainly on line-shaped networks, where we have access to trusted endpoints or, equivalently, ring-shaped networks, where we have access to two adjacent trusted nodes. We show that this setup allows for qualitative and quantitative bounds on the key quantum properties, such as entanglement, present in the inaccessible vertices and nodes of the network.

Our methods are based on network quantum steering \cite{Jones_2021} and more specifically to the related generalised Choi isomorphism \cite{Shirokov2008,Kiukas2017}. Since we only trust the end nodes, our methods provide a semi-device independent (SDI) tool to certify various entanglement properties. Moreover, as networks have the mathematically strong assumption of independent sources, network steering provides an SDI certification method for objects that cannot be certified in the traditional steering scenario. Concretely, we provide lower bounds on common convex geometric quantifiers of entanglement. These include the robustness, as well as the convex weight, for individual states and measurements. We also discuss lower bounds for these quantifiers in the network as a whole. We further show that our framework can be applied to derive bounds on entanglement dimensionality, i.e. the Schmidt number, of the objects in the network. This leads us to propose a definition of Schmidt number for bipartite measurements, which may be of independent interest, especially for benchmarking current optical platforms capable of two-qutrit teleportation; see, e.g. Ref.~\cite{Luo_2019}. We end the paper by using our framework to show that some one-way steerable states can be activated in a bilocality network to allow for SDI Schmidt number certification.

\section{Viewing Networks as operations}

\subsection{Standard bipartite steering}

In standard quantum steering~\cite{Wiseman_2007,Cavalcanti2017,Uola2020} Alice and Bob share a bipartite state $\rho^{AB}$, i.e. a positive semi-definite operator with trace one acting on a tensor product of two Hilbert spaces. This work assumes the Hilbert spaces to be finite-dimensional.
Bob performs a set of local measurements $B_{b|y}^B$ on his part of the state, with $y$ labeling the measurement choice and $b$ the corresponding outcome. Here, we concentrate on scenarios with a finite number of measurements and outcomes.
In quantum mechanics, a measurement is described by a positive operator-valued measure (POVM), that is a collection of positive semi-definite matrices summing up to the identity, i.e. $\sum_b B_{b|y}=\id$.
After the state update caused by this measurement, Alice is left with a state assemblage, cf. Fig.~\ref{fig:twoParty},
\begin{align}\label{Eq:StateAssemblage}
    \sigma^A_{b|y}:=\tr_B(\rho^{AB}(\id^A \otimes B_{b|y}^B)).
\end{align}
Note that the members of the state assemblage are sub-normalised, i.e. their trace is less than or equal to one. 
In the standard formulation of finite-dimensional quantum steering, the question is whether the assemblage is preparable with some separable state \cite{Kogias_2015,Moroder_2016,jokinen2023compressingcontinuousvariablequantum}.
Here, no assumption about the Hilbert space dimension of Bob's system is made, but it is kept finite. Hence, typically steering is referred to as one-sided device-independent entanglement certification and Bob's measurements in such protocol are called untrusted.

The shared state $\rho^{AB}$ in a steering scenario can be viewed as an operation, which is a linear completely positive and trace non-increasing map. This operation maps sets of Bob's POVMs $B_{b|y}^B$ to state assemblages on Alice $\sigma^A_{b|y}$.
The connection is based on a generalized version of the Choi-Jamiolkowski isomorphism between quantum channels, i.e. trace-preserving operations, and bipartite states with a fixed full-rank marginal state $\rho_A$. Following Ref.~\cite{Kiukas2017},
a bipartite state $\rho$ defines a channel $\Lambda_\rho: \mathcal{M}(\mathcal{H}^A) \rightarrow \mathcal{M}(\mathcal{H}^B)$ in the Heisenberg picture as
\begin{equation}\label{Eq:ChoiHeisenberg}
    \rho_A^{\frac{1}{2}} \Lambda_\rho^*(B_{b|y})^T \rho_A^{\frac{1}{2}}:=\tr_B(\rho^{AB}(\id^A \otimes B_{b|y}^B))
\end{equation}
where $\rho_A = \tr_B(\rho^{AB})$ and the transposes is taken in the eigenbasis of the state $\rho_A$. We note that the left-hand-side of Eq.~(\ref{Eq:ChoiHeisenberg}) is simply an operation $\mathcal{S}^*_\rho: \mathcal{M}(\mathcal{H}^{B}) \rightarrow \mathcal{M}(\mathcal{H}^A)$ acting on Bob's POVMs. We define this formally as
\begin{equation} \label{eq:Steering}
    \mathcal{S}^*_\rho(B_{b|y}) := \rho_A^{\frac{1}{2}} \Lambda_\rho^*(B_{b|y})^T \rho_A^{\frac{1}{2}}
    =\sigma_{b|y}^A .
\end{equation}

For later convenience, we note that the Schrödinger pictures of the above maps are given by
\begin{equation} \label{eq:channel_state}
    \Lambda_\rho(\tau) = \tr_A\left( \rho_2^{AB} \left( \left(\rho_A^{-\frac{1}{2}} \tau^T \rho_A^{-\frac{1}{2}}\right)^A \otimes \id^B \right) \right) \:
\end{equation}
and
\begin{align} \label{eq:operation_state}
    \mathcal{S}_\rho(\tau) 
    = \tr_A\left( \rho^{AB} (\tau^{A} \otimes \id^B) \right)
    = \Lambda_\rho(\rho_A^{\frac{1}{2}}\tau^T\rho_A^{\frac{1}{2}} ) \: .
\end{align}
In the following, we apply this formalism to more involved scenarios, starting with the bilocality scenario and going from there to more complex line networks.

\begin{figure}
    \centering
    \includegraphics[width= 0.5\linewidth, clip]{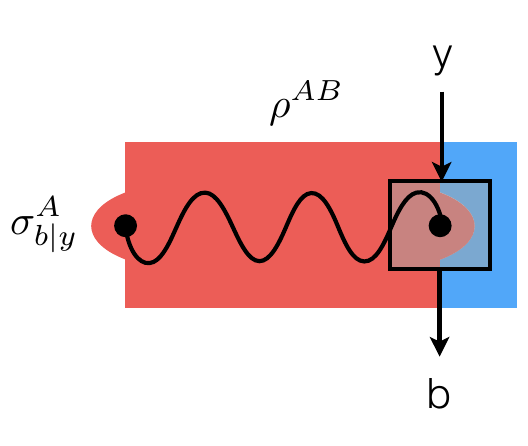}
    \caption{\textbf{Bipartite Quantum Steering.} The dots represent Alice's (on the left) and Bob's (on the right) shares of an entangled state $\rho^{AB}$. The shared state can be viewed as an operation mapping the untrusted measurement set to a state assemblage $\sigma_{b|y}^A$}.
    \label{fig:twoParty}
\end{figure}

\subsection{Bilocality Scenario}

In the bilocality scenario, cf. Fig.~\ref{fig:Bilocality}, there are two independent bipartite states $\rho_1^{AB}$ and $\rho_2^{B'C}$. On these states, a global measurement $B_b^{BB'}$ is performed on systems $B$ and $B'$ and local measurements are performed on $A$ and $C$.
This scenario provides an example of a quantum network. In this work, we concentrate on networks with trusted endpoints. Hence, the scenario can be viewed as a combination of two bipartite steering scenarios where an untrusted measurement $B_b^{BB'}$ acts globally on two systems. We note that unlike in the standard steering scenario, in networks measurement inputs $y$ are not required to certify entanglement properties.
Therefore, in the following it is assumed that the measurements have no inputs. However, it is straight forward to extend the results to networks with measurement inputs. Changing from shared states into the operation picture via Eq.~(\ref{eq:Steering}), we get the following observation:
\begin{observation}
The bilocality scenario with trusted endpoints can be written in the operations picture as follows    
\begin{align} \label{eq:Bilocality}
    \sigma_b^{AC} &=\tr_{BB'}\left( (\id^A \otimes B_b^{BB'} \otimes \id^C) (\rho^{AB}_1 \otimes \rho^{B'C}_2) \right) \nonumber\\
     &= (\mathcal{S}^*_{\rho_1} \otimes \mathcal{S}^*_{\rho_2})(B_b) \: ,
\end{align}
where $\mathcal{S}^*_{\rho_1}: \mathcal{M}(\mathcal{H}^B) \rightarrow \mathcal{M}(\mathcal{H}^A)$ and $\mathcal{S}^*_{\rho_2}: \mathcal{M}(\mathcal{H}^{B'}) \rightarrow \mathcal{M}(\mathcal{H}^C)$ are operations corresponding to the states $\rho^{AB}_1$ and $\rho^{AB}_2$, respectively. They map the global measurement to a state ensemble, i.e. a state assemblage with no input $y$.
\end{observation}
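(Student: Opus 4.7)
The plan is to verify both equalities in Eq.~(\ref{eq:Bilocality}) by direct computation. The first equality is essentially the definition of the state assemblage produced by the bilocality scenario: since the global measurement $B_b^{BB'}$ acts on systems $B$ and $B'$ while $A$ and $C$ are left untouched, the unnormalised post-measurement state on $AC$ is obtained by inserting the measurement effect and tracing out the measured systems, exactly mirroring Eq.~(\ref{Eq:StateAssemblage}) from the bipartite steering case.

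For the second equality, the key idea is to exploit linearity of the maps $\mathcal{S}^*_{\rho_1}$ and $\mathcal{S}^*_{\rho_2}$. First I would expand the global effect in a product basis of operators, $B_b^{BB'} = \sum_i M_i^B \otimes N_i^{B'}$; such a decomposition always exists for operators on a finite-dimensional tensor-product space. Substituting this into the first line of Eq.~(\ref{eq:Bilocality}), the product form $\rho_1^{AB} \otimes \rho_2^{B'C}$ of the joint source lets the partial trace factor cleanly across $B$ and $B'$, producing a sum of terms of the form
\[
\tr_B\bigl(\rho_1^{AB}(\id^A \otimes M_i^B)\bigr) \otimes \tr_{B'}\bigl(\rho_2^{B'C}(N_i^{B'} \otimes \id^C)\bigr).
\]
Recognising each factor as $\mathcal{S}^*_{\rho_1}(M_i)$ and $\mathcal{S}^*_{\rho_2}(N_i)$ respectively via Eq.~(\ref{eq:Steering}), and collapsing the sum using linearity of the tensor product of maps, one recovers exactly $(\mathcal{S}^*_{\rho_1} \otimes \mathcal{S}^*_{\rho_2})(B_b^{BB'})$.

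The main obstacle is notational rather than conceptual: one has to track carefully the placement of the tensor factors and verify that the partial traces genuinely split as claimed, which they do because the two sources are supported on disjoint systems. Since no property of $B_b^{BB'}$ beyond being a linear operator enters the argument, the same derivation extends verbatim to arbitrary (not necessarily positive or normalised) operators on $BB'$, and by iterating the same factorisation step it will generalise to longer line networks with more than two independent sources — a feature that looks essential for the later sections treating general line networks.
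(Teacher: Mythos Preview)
Your proposal is correct and matches the paper's approach: the paper treats this observation as immediate from the definition in Eq.~(\ref{eq:Steering}) and does not give a separate proof, so your product-basis expansion together with the factorisation of the partial trace over independent sources is precisely the computation the paper leaves implicit.
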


In what follows, it is crucial that the operations $\mathcal{S}^*_{\rho_1}$ and $\mathcal{S}^*_{\rho_2}$ act locally with respect to the cut $AB:B'C$. This is a manifestation of the assumption of independent sources. The locality of the operations is used extensively throughout the certification proofs in the next sections.

Eq.~(\ref{eq:Bilocality}) shows that the operation formalism allows one to write down the action of the network on the global measurement in a compact manner. In contrast to mapping the global measurement, the focus can be placed on mapping one of the bipartite states $\rho_1^{AB}$. 
For this the measurement $B_b^{BB'}$ and the state $\rho_2^{B'C}$ are required to be written as one combined map from system $B$ to system $C$, cf. right-hand side of Fig.~\ref{fig:Bilocality}.
This map is a concatenation of the operation related to the measurement $B_b^{BB'}$ and operation related to the state $\rho_2^{B'C}$.

First, we note that unlike in the case of mapping measurements, where a Heisenberg operation is used, here we use the Schrödinger picture as we are mapping states. Consider the state $\rho_2^{B'C}$. The Schrödinger picture of the operation $\mathcal{S}_{\rho_2}$ associated with this state is given in Eq.~(\ref{eq:operation_state}). Here, $\hil^A$ and $\hil^B$ need to be relabeled to $\hil^{B'}$ and $\hil^C$, respectively.

Second, consider the action of the measurement $B_b$.
To determine the associated operation, a slightly modified version of the Choi-Jamiolkowski isomorphism is required. For this, we follow Ref.~\cite{Shirokov2008}.
The formalism has a technical requirement that the POVM elements need to have trace at most one.
Therefore, as an intermediate step the operation associated with $\sfrac{B_b}{d^2}$ is determined.
As another technical requirement, the marginal state $\chi$ of the Choi state has to satisfy $\chi \geq \tr_B\left(\sfrac{B_b}{d^2}\right)$, for which $\chi = \sfrac{\id}{d}$ is a valid choice.
According to Ref.~\cite{Shirokov2008}, the operation $\mathcal{B}_{B_{b}}: \mathcal{M}(\mathcal{H}^{B}) \rightarrow \mathcal{M}(\mathcal{H}^{B'})$ associated with $\sfrac{B_b}{d^2}$ can be written as
\begin{equation}
    \mathcal{B}_{B_{b}}(\tau) = d \tr_B \left( \frac{B_b^{BB'}}{d^2} \left((\tau^B)^T \otimes \id^{B'}\right)\right) \: .
\end{equation}
We can now write the map induced by a POVM element in the bilocality scenario more compactly as
\begin{align}
    & \tr_B\left( B_b^{BB'}(\tau^B \otimes \id^{B'})\right)
     = d \mathcal{B}_{B_b}(\tau^T)
     =: \mathcal{B}'_{B_b}(\tau)
\end{align}
where $\mathcal{B}'_{B_{b}}: \mathcal{M}(\mathcal{H}^{B}) \rightarrow \mathcal{M}(\mathcal{H}^{B'})$ is a completely positive map.

Finally, the combined action of the measurement and the shared state $\mathcal{E}_{B_b,\rho_2}: \mathcal{M}(\mathcal{H}^B) \rightarrow \mathcal{M}(\mathcal{H}^C)$ is defined as the concatenation of the individual actions
\begin{equation}  \label{eq:Def_BuildingBlock}
    \mathcal{E}_{B_b,\rho_2} (\tau) = \left(\mathcal{S}_{\rho_2} \circ \mathcal{B'}_{B_b}\right)(\tau) \: .
\end{equation}
This is what we call the \textit{building block} of a network. Notice that for notational convenience, we have taken the factor $d$ into the definition of $\mathcal{B'}_{B_b}$, which may cause the building block to be trace-increasing. However, for our certification proofs it is only crucial that the building blocks act locally. This map is illustrated in Fig.~\ref{fig:BuildingBlock} where the blue block represents $\mathcal{B}'_{B_b}$ and the red block represents $\mathcal{S}_{\rho_2}$ and they are combined to a purple building block $\mathcal{E}_{B_b,\rho_2}$.
The use of a building block in a bilocality network is summarised in the following observation, cf. right-hand side of Fig.~\ref{fig:Bilocality}:

\begin{observation} 
When placing a state $\rho_1^{AB}$ in a bilocality network characterised by a building block $\mathcal{E}_{B_b,\rho_2}$, the state ensemble at the endpoints is given by 
\begin{equation}\label{Eq:StateTransmit}
    \sigma_b^{AC} = \left(\text{id}_A \otimes \mathcal{E}_{B_b,\rho_2}\right) (\rho_1^{AB}) \:.
\end{equation}
\end{observation}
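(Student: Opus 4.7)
The plan is to prove Observation~2 by a direct unfolding of the definitions in Eqs.~\eqref{eq:operation_state} and~\eqref{eq:Def_BuildingBlock}, and matching the result to Eq.~\eqref{eq:Bilocality} of Observation~1. The key structural fact I would exploit is that $\mathcal{E}_{B_b,\rho_2}$ is a linear map on $\mathcal{M}(\mathcal{H}^B)$, so its ampliation $\text{id}_A\otimes\mathcal{E}_{B_b,\rho_2}$ acts slot-wise: writing $\rho_1^{AB}=\sum_i X_i^A\otimes Y_i^B$ in any operator basis and invoking linearity, it suffices to verify the identity on a factorised operator $X^A\otimes Y^B$ and then resum.

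Next, I would apply the inner map $\mathcal{B}'_{B_b}$ to the $B$-slot. Using the expression $\mathcal{B}'_{B_b}(Y^B)=\tr_B(B_b^{BB'}(Y^B\otimes\id^{B'}))$ stated in the excerpt just after the definition of $\mathcal{B}'_{B_b}$ (and obtained by checking that the two factors of $d$ together with the two transposes in $\mathcal{B}'_{B_b}(\tau):=d\,\mathcal{B}_{B_b}(\tau^T)$ cancel), and ampliating by $\text{id}_A$, I obtain the intermediate $AB'$-operator $\tr_B\bigl((\id^A\otimes B_b^{BB'})(\rho_1^{AB}\otimes\id^{B'})\bigr)$. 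This realises the blue block composed with $\rho_1$ on the right-hand side of Fig.~\ref{fig:Bilocality}.

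Then I would apply $\mathcal{S}_{\rho_2}$ to the $B'$ register of this intermediate operator. Its Schrödinger form $\mathcal{S}_{\rho_2}(\omega^{B'})=\tr_{B'}(\rho_2^{B'C}(\omega^{B'}\otimes\id^C))$ follows from Eq.~\eqref{eq:operation_state} by the relabelling $A\mapsto B'$, $B\mapsto C$. Substituting the intermediate operator and using compatibility of partial traces over disjoint tensor factors, the composite collapses to $\tr_{BB'}\bigl((\id^A\otimes B_b^{BB'}\otimes\id^C)(\rho_1^{AB}\otimes\rho_2^{B'C})\bigr)$, which is precisely Eq.~\eqref{eq:Bilocality}. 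The only genuine obstacle is notational bookkeeping across the four labels $A,B,B',C$ and confirming that the transposes absorbed into $\mathcal{B}'_{B_b}$ cancel correctly; conceptually nothing more is required, since the building block was defined precisely so as to reproduce the network expression.
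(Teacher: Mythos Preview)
Your proposal is correct and is precisely the verification the paper leaves implicit: the paper states Observation~2 immediately after defining the building block in Eq.~\eqref{eq:Def_BuildingBlock} and offers no separate proof, treating it as a direct consequence of the definitions and of Eq.~\eqref{eq:Bilocality}. Your unfolding of $\mathcal{B}'_{B_b}$ and $\mathcal{S}_{\rho_2}$ on product operators, followed by linearity and the partial-trace bookkeeping to recover $\tr_{BB'}\bigl((\id^A\otimes B_b^{BB'}\otimes\id^C)(\rho_1^{AB}\otimes\rho_2^{B'C})\bigr)$, is exactly the intended route.
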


\begin{figure}
    \centering
    \includegraphics[width= 0.9\linewidth, clip]{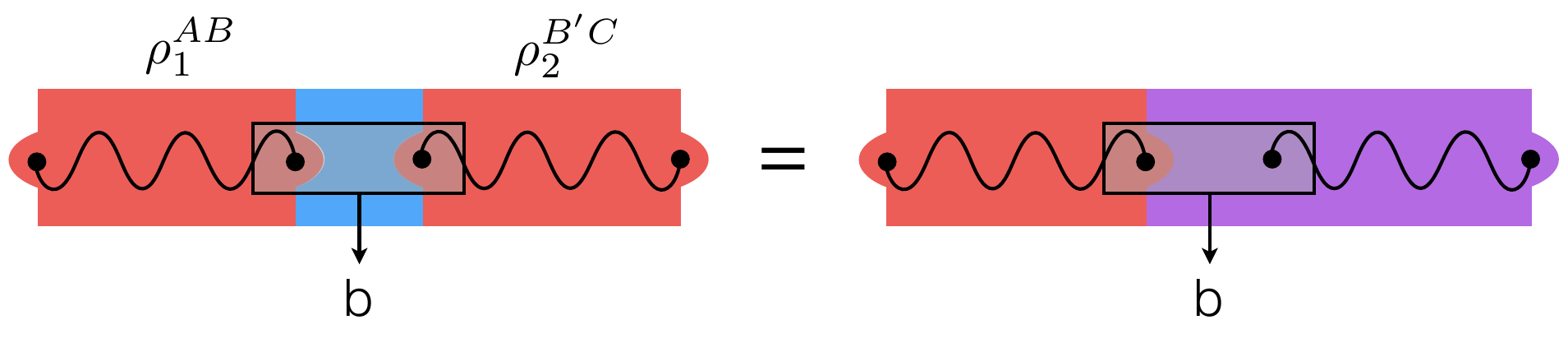}
    \caption{\textbf{Bilocality Scenario.} There are two shared bipartite states and one untrusted measurement. This can be interpreted as two bipartite steering scenarios with a global measurement (left-hand side), as formalised by Eq.~(\ref{eq:Bilocality}). Alternatively, this can be seen as transmitting part of a state through an operation defined by a measurement and a bipartite state (right-hand side), as formalised by Eq.~(\ref{Eq:StateTransmit}).}
    \label{fig:Bilocality}
\end{figure}

\begin{figure}
    \centering
    \includegraphics[width= 0.9\linewidth, clip]{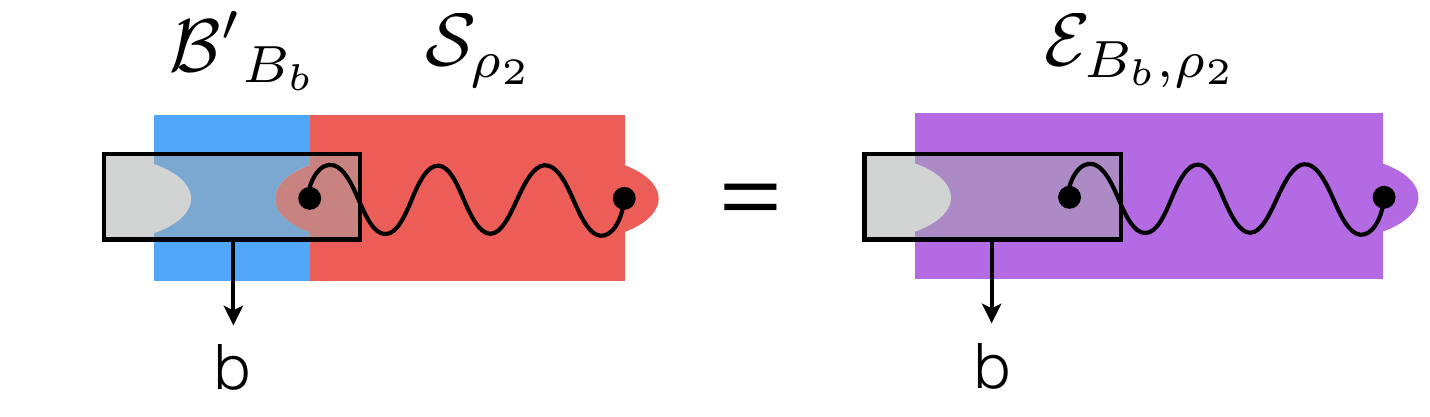}
    \caption{\textbf{Building blocks of line networks.} Aside from a state or measurement of interest a line network can be viewed as a concatenation of a map defined by a measurement and a shared bipartite state.}
    \label{fig:BuildingBlock}
\end{figure}

The introduced building block $\mathcal{E}_{B_b,\rho_2}$ appears in a variety of quantum communication scenarios, e.g. entanglement swapping, quantum teleportation, and entanglement-assisted prepare-and-measure scenarios.
In this work, we focus on using the building blocks when certifying line networks with trusted end points~\footnote{Notice that this is equal to a circular network with one trusted party. Circular networks of this form are simply line networks where the trusted end points are treated as one party.}.

\subsection{Line network}

As more involved networks require the use of various Hilbert spaces, from here on we drop the labels $A,B,C$ etc. of the space when there is no risk of confusion.
In the operations picture, a line network can be decomposed into a bilocality network with a series of measurement-state pairs on either side.
Concretely, say a bipartite state $\rho$ is placed into a line network, cf. top line of Fig.~\ref{fig:LineNetwork}.
Then there are $m$ measurements $B_{b_j}$ and $m$ states $\omega_j$ to its left, each pair forming a building block $\mathcal{E}_{B_{b_j},\omega_j}$. Thus, the total action of all measurements and states to the left of the state are given by the concatenation
\begin{equation}
    \mathcal{A}_{\vec{b}} := \mathcal{E}_{B_{b_m},\omega_m} \circ .. \circ \mathcal{E}_{B_{b_1},\omega_1}
\end{equation}
where $\vec{b} = (b_1,b_2,...,b_m)$.
Note that if there are no measurements and states to the left of the considered state $\rho$ (e.g. see right-hand side of Fig.~\ref{fig:Bilocality}), then $m=0$ and hence, $\mathcal{A} = \text{id}$.
Say in the line network under consideration there are also $n$ measurements $C_{c_j}$ and $n$ states $\xi_j$ on the right-hand side of the considered state $\rho$. These form building blocks $\mathcal{E}_{C_{c_j},\xi_j}$. Then the combined action of all measurements and states on the right-hand side is written as
\begin{equation}
    \tilde{\mathcal{A}}_{\vec{c}} := \mathcal{E}_{C_{c_n},\xi_n} \circ .. \circ \mathcal{E}_{C_{c_1},\xi_1}
\end{equation}
where $\vec{c} = (c_1,c_2,...,c_n)$.
Note that if there are no measurements or states to the right of the considered state $\rho$, $n=0$ and hence, we set $\tilde{\mathcal{A}} = \text{id}$. We summarise this in the following.
\begin{observation}
When inserting a state $\rho$ into a line network charaterised by the concatenated building blocks $\mathcal{A}_{\vec{b}}$ and $\tilde{\mathcal{A}}_{\vec{c}}$ acting locally on $\rho$, the state ensemble at the endpoints reads
\begin{equation} \label{eq:network_state}
    \sigma_{\vec{b},\vec{c}} = (\mathcal{A}_{\vec{b}} \otimes \tilde{\mathcal{A}}_{\vec{c}}) (\rho) \: .
\end{equation}
\end{observation}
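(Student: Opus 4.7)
The plan is to prove the observation by induction on the total number $m+n$ of building blocks flanking the central state $\rho$, using the bilocality result from Eq.~(\ref{Eq:StateTransmit}) as the engine for the induction step. The base case $m=n=0$ is immediate, since $\mathcal{A}=\tilde{\mathcal{A}}=\text{id}$ reduces Eq.~(\ref{eq:network_state}) to $\sigma = \rho$. The cases $m=0, n=1$ and $m=1, n=0$ are exactly (symmetric versions of) the bilocality observation already established.

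For the induction step, suppose the claim holds when there are $m$ blocks on the left and $n$ on the right, so that the intermediate state at the current endpoints equals $(\mathcal{A}_{\vec{b}} \otimes \tilde{\mathcal{A}}_{\vec{c}})(\rho)$. Without loss of generality, consider appending one more block $(C_{c_{n+1}}, \xi_{n+1})$ on the right; the case of appending on the left is symmetric. From the perspective of the current right endpoint system, appending this block is precisely a bilocality situation in which the role of $\rho_1$ is played by the intermediate state at the current endpoints and the role of $\rho_2$ is played by the auxiliary state $\xi_{n+1}$, with the measurement $C_{c_{n+1}}$ coupling the current right endpoint to the adjacent subsystem of $\xi_{n+1}$. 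Applying Eq.~(\ref{Eq:StateTransmit}) then yields a new right endpoint obtained by acting with $\mathcal{E}_{C_{c_{n+1}}, \xi_{n+1}}$ on the old one, which matches the recursive definition $\tilde{\mathcal{A}}_{(\vec{c}, c_{n+1})} = \mathcal{E}_{C_{c_{n+1}}, \xi_{n+1}} \circ \tilde{\mathcal{A}}_{\vec{c}}$ while leaving $\mathcal{A}_{\vec{b}}$ untouched on the left. This completes the inductive step.

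The main subtlety is to ensure that at every stage of the induction the locality property holds: the newly applied map acts non-trivially on only one side of the tensor cut, so that the state at the endpoints retains the product form $\mathcal{A}_{\vec{b}} \otimes \tilde{\mathcal{A}}_{\vec{c}}$ rather than mixing the two sides. This is exactly the independent-sources assumption emphasized right after Eq.~(\ref{eq:Bilocality}): because each auxiliary state $\omega_j$ or $\xi_j$ is in tensor product with everything else and each measurement touches only adjacent systems on one side, no cross-correlation between the two halves of the network is ever generated. Carefully tracking this on the level of Hilbert spaces is essentially bookkeeping, but it is the one point that requires genuine care; once it is checked, the inductive application of the bilocality result on either side is mechanical.
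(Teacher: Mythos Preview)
Your inductive argument is correct and is a clean way to formalise the claim. Note, however, that the paper does not actually give a proof of this observation at all: it simply defines $\mathcal{A}_{\vec{b}}$ and $\tilde{\mathcal{A}}_{\vec{c}}$ as the concatenations of the building blocks on each side of $\rho$ and then states Eq.~(\ref{eq:network_state}) as the resulting formula, treating it as an immediate consequence of the construction and of the bilocality case in Eq.~(\ref{Eq:StateTransmit}). Your induction on $m+n$, with the bilocality observation as the engine for appending one block at a time, is precisely the natural way to make that informal step rigorous, and your remark about the locality bookkeeping is the one point of substance in the argument.
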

Alternatively, the focus can be placed on a measurement $\{M_i\}_i$ in the line network, c.f. bottom line of Fig.~\ref{fig:LineNetwork}.
We can take stock of the above Observation by noting that the states $\omega_0$ and $\xi_0$ neighbouring the measurement act as local operations $\mathcal{S}^*_{\omega_0}$ and $\mathcal{S}^*_{\xi_0}$. These map the POVM elements of $M_i$ to subnormalised states.
Subsequently, the remaining measurements and states of the network act in the same manner as presented previously.
\begin{observation}
When inserting a bipartite measurement $\{M_i\}_i$ into a line network, the state ensemble at the endpoints is given by
\begin{align} \label{eq:network_meas}
    \sigma_{i,\vec{b},\vec{c}} 
    & = (\mathcal{A}_{\vec{b}} \otimes \tilde{\mathcal{A}}_{\vec{c}}) (\mathcal{S}^*_{\omega_0} \otimes \mathcal{S}^*_{\xi_0})(M_i) \nonumber \\
    & = (\mathcal{A}_{\vec{b}} \circ \mathcal{S}^*_{\omega_0} \otimes \tilde{\mathcal{A}}_{\vec{c}} \circ \mathcal{S}^*_{\xi_0}) (M_i) \: .
\end{align} 
\end{observation}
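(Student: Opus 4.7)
The plan is to reduce the measurement-insertion case to the state-insertion case already handled by the previous Observation. The idea is to absorb the measurement $\{M_i\}_i$ together with its two adjacent source states $\omega_0$ and $\xi_0$ into a single sub-normalised bipartite operator sitting at the centre of the network. Once this is done, the chain of building blocks on either side acts on this effective bipartite object exactly as it would on a genuine bipartite state, and Eq.~(\ref{eq:network_state}) applies.

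First I would isolate the local neighbourhood of $M_i$. The POVM element $M_i$ acts on the tensor product of two Hilbert spaces, one of which is the local share of $\omega_0$ and the other the local share of $\xi_0$. Because the two sources are independent, the global state on the four spaces involved is the tensor product $\omega_0 \otimes \xi_0$, so the partial trace over those four spaces against $\id \otimes M_i \otimes \id$ factorises across the $\omega_0$/$\xi_0$ cut. Applying the definition in Eq.~(\ref{eq:Steering}) once per side, with $M_i$ playing the role of an untrusted ``input'' POVM element, I would rewrite this factorised partial trace exactly as $(\mathcal{S}^*_{\omega_0} \otimes \mathcal{S}^*_{\xi_0})(M_i)$. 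The outcome is a bipartite sub-normalised operator acting on the two Hilbert spaces neighbouring $\omega_0$ and $\xi_0$ on the outer side, i.e.\ at precisely the position where a shared state would live in the state-insertion scenario.

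With this effective bipartite object in place, the remaining $m$ building blocks on the left concatenate into $\mathcal{A}_{\vec{b}}$ and the remaining $n$ blocks on the right concatenate into $\tilde{\mathcal{A}}_{\vec{c}}$, acting on disjoint tensor factors thanks to the independent-sources assumption (the same locality property emphasised after Eq.~(\ref{eq:Bilocality})). Invoking the previous Observation with $(\mathcal{S}^*_{\omega_0} \otimes \mathcal{S}^*_{\xi_0})(M_i)$ in place of $\rho$ yields the first equality. The second equality is then just the mixed-product identity $(X \otimes Y)(U \otimes V) = (X \circ U) \otimes (Y \circ V)$ for maps on matching tensor factors.

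The main obstacle is not any deep computation but the careful bookkeeping of Hilbert-space labels: one must verify at each step that $\mathcal{A}_{\vec{b}}$ and $\mathcal{S}^*_{\omega_0}$ act only on the left factor and $\tilde{\mathcal{A}}_{\vec{c}}$ and $\mathcal{S}^*_{\xi_0}$ only on the right, so that the tensor factorisation is genuine all the way through. This is precisely the independent-sources assumption used repeatedly throughout the paper, applied here $m+n+2$ times in succession.
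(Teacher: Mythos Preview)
Your proposal is correct and follows essentially the same route as the paper: the paper's justification, given in the paragraph immediately preceding the Observation, likewise first uses the neighbouring states $\omega_0$ and $\xi_0$ as local operations $\mathcal{S}^*_{\omega_0}$ and $\mathcal{S}^*_{\xi_0}$ to convert $M_i$ into a sub-normalised bipartite state, and then lets the remaining building blocks act exactly as in the previous Observation. Your additional remark that the second equality is just the mixed-product identity for linear maps is a welcome clarification that the paper leaves implicit.
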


\begin{figure}[t!]
    \centering
    \includegraphics[width= 0.9\columnwidth, clip]{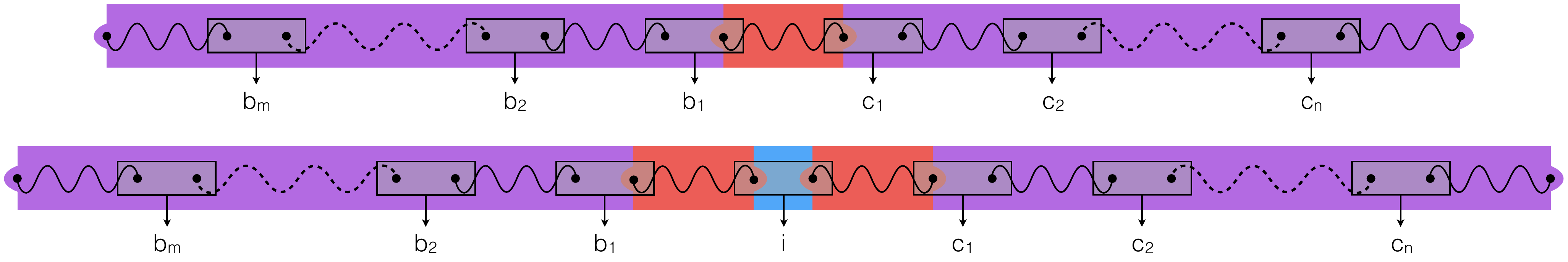}
    \caption{\textbf{Line network.} A line network can be viewed as a state on which a concatenation of building blocks, i.e. pairs of a measurement and a state, acts locally on either side (top line), as formalised in Eq.~(\ref{eq:network_state}).
    Alternatively, the scenario can be viewed as a measurement which is placed in a bilocality network extended by building blocks on both sides (lower line), as formalised in Eq.~(\ref{eq:network_meas}).}
    \label{fig:LineNetwork}
\end{figure}

\section{Ideal line networks}

An optimal building block for certifying properties of states or measurements in a network is one that has minimal noise. For example, a building block consisting of a Bell state and a Bell measurement corresponds to a perfect teleportation scenario. In this case the ensemble at the trusted endpoints will encode all relevant quantum information of the state or measurement we wish to certify. Hence, any quantum property that is not created under local actions, can be certified from the trusted endpoints. This constitutes a semi-device independent witness, as the network itself does not need to be trusted apart from having independent sources, i.e. maps acting locally. It is worth noting, that an idealised certification scenario can even be made device-independent for example in the case of entanglement certification of states \cite{PhysRevLett.121.180503}. 

However, we are interested in certifying states and measurements in imperfect networks. For example, in the above case of perfect Bell states and Bell state measurements, one can introduce white noise to each state with visibility $v$. Thus, the state ensemble has visibility $v^k$ when we have $k$ building blocks, hence, limiting the certifying power of the network. We formalise this idea in the next section to include more general noise models than the visibility.

\section{Robustness and convex weight based entanglement quantifiers}

Utilising the presented framework, it is clear that if a conditional state prepared by the network is entangled, all the states in the network have to be entangled and the measurements have to be entangling.
The framework also suggests that the entanglement of states and measurements in a network can be quantified from the trusted endpoints.
The only requirement appears to be that a chosen quantifier does not increase under local operations.
In the following we will present different entanglement quantifiers which can be lower bound in a line network. Specifically, we will focus on robustness- and weight-based quantifiers in this section, as well as on the Schmidt number as an example of a dimension-based quantifier in the next section.

\subsection{Certification of states}

The robustness of a state $\rho$ with respect to a convex and closed set $\mathcal{F}$, called the free set, corresponds to the amount of noise that is required for the state to enter the set $\mathcal{F}$. Formally, it is defined as
\begin{align}
    \text{R}_\mathcal{F}^\mathcal{N}(\rho) = \min r \geq 0 \text{ s.t. } \rho = (1+r) \tau - r \eta,
\end{align}
where $\tau$ runs over states in the free set $\mathcal{F}$ and $\eta$ runs over another convex and compact subset $\mathcal{N}$ of states, called the noise set. For example, in the case of $\mathcal{F}$ being the set of separable states, the robustness expresses how much noise of a given type can be added until the state becomes separable.

To illustrate the idea of quantifying the robustness in a network, consider a bilocality network and depolarizing noise. More precisely, we take the noise states to have the same marginal on the untrusted side as the original one.
In this case, if a state $\rho$ in the bilocality network has robustness $r$, then the prepared state ensemble $\sigma_b$ takes the form
\begin{align} \label{eq:Depolar}
    \sigma_b
    & = (\text{id} \otimes \mathcal{E}_{B_b,\omega}) (\rho)   
    = \tr(\sigma_b)\left( (1+r) \hat{\tilde{\tau}}_b -r \hat{\tilde{\eta}}_b \right)
\end{align}
where $\hat{\tilde{\tau}}_b \in \mathcal{F}$, $\hat{\tilde{\eta}}_b \in \mathcal{N}$, and a hat on top of an operator denotes that the operator has trace one.
Here we used that $\tr(\sigma_b) =  \tr(\tilde{\eta}_b) =  \tr(\tilde{\tau}_b)$ due to the particular choice of the noise set $\mathcal{N}$, i.e. 
\begin{align}
    \tr(\tilde{\eta}_b)
    & = \tr\left( (B_b^{BB'} \otimes \id^C) (\tr_A(\eta^{AB})\otimes \omega^{B'C}) \right) \nonumber\\
    & = \tr\left( (B_b^{BB'} \otimes \id^C) (\tr_A(\rho^{AB})\otimes \omega^{B'C}) \right) \nonumber\\
    & = \tr(\sigma_b)
\end{align}
where $\eta$ comes from the definition of the robustness.
Thus, also $\tr(\sigma_b) =  \tr(\tilde{\tau}_b)$.
The decomposition given in Eq.~(\ref{eq:Depolar}) has the same form as required by the robustness. 
Thus, the robustness of the prepared state $\sigma_b$ is upper bounded by the robustness $r$ of the state $\rho$.
Conversely, this implies that the robustness of both states in the bilocality network is directly lower bound by the robustness of the induced states at the endpoints. We formalise this in the following Observation.

\begin{observation}
    Assume that each of the cones spanned by the free set $\mathcal{F}$ and the noise set $\mathcal{N}$ are closed under local operations and that the noise is depolarizing. Further, let $\sigma_{\vec{b},\vec{c}}$ denote the state ensemble at the trusted end-points of a line network.
    Then, we have the following inequality:
    \begin{equation}
        \text{R}_\mathcal{F}^\mathcal{N}(\hat{\sigma}_{\vec{b},\vec{c}}) \leq \text{R}_\mathcal{F}^\mathcal{N}(\rho) \: ,
    \end{equation}
    where $\rho$ is a state in the network.
\end{observation}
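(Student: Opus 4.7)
The plan is to lift the robustness decomposition of $\rho$ through the local network maps and show that the image is itself a valid robustness decomposition of the normalized endpoint state $\hat{\sigma}_{\vec{b},\vec{c}}$, mirroring the bilocality computation in Eq.~(\ref{eq:Depolar}) and extending it by linearity to arbitrary line networks. Concretely, I would start from an optimal decomposition
\begin{equation*}
\rho = (1+r)\tau - r\eta, \qquad \tau \in \mathcal{F},\ \eta \in \mathcal{N},\ r = \text{R}_\mathcal{F}^\mathcal{N}(\rho),
\end{equation*}
and apply the network map of Eq.~(\ref{eq:network_state}) to both sides. Linearity then yields
\begin{equation*}
\sigma_{\vec{b},\vec{c}} = (1+r)\,\tilde{\tau}_{\vec{b},\vec{c}} - r\,\tilde{\eta}_{\vec{b},\vec{c}},
\end{equation*}
with $\tilde{\tau}_{\vec{b},\vec{c}} = (\mathcal{A}_{\vec{b}} \otimes \tilde{\mathcal{A}}_{\vec{c}})(\tau)$ and $\tilde{\eta}_{\vec{b},\vec{c}}$ defined analogously.

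Since the independent-source assumption makes $\mathcal{A}_{\vec{b}} \otimes \tilde{\mathcal{A}}_{\vec{c}}$ local across the bipartition of $\rho$, the cone-closure hypothesis immediately places $\tilde{\tau}_{\vec{b},\vec{c}}$ in the cone of $\mathcal{F}$ and $\tilde{\eta}_{\vec{b},\vec{c}}$ in the cone of $\mathcal{N}$. Rescaling by their traces then puts $\hat{\tilde{\tau}}_{\vec{b},\vec{c}}$ inside $\mathcal{F}$ and $\hat{\tilde{\eta}}_{\vec{b},\vec{c}}$ inside $\mathcal{N}$, provided the traces behave appropriately.

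The step I expect to be the main obstacle is exactly this trace-matching, namely $\tr(\tilde{\tau}_{\vec{b},\vec{c}}) = \tr(\tilde{\eta}_{\vec{b},\vec{c}}) = \tr(\sigma_{\vec{b},\vec{c}})$, which is what allows the decomposition above to be rewritten in the valid normalized form $\hat{\sigma}_{\vec{b},\vec{c}} = (1+r)\hat{\tilde{\tau}}_{\vec{b},\vec{c}} - r\hat{\tilde{\eta}}_{\vec{b},\vec{c}}$ and thereby certifies $\text{R}_\mathcal{F}^\mathcal{N}(\hat{\sigma}_{\vec{b},\vec{c}}) \leq r$. In the bilocality case the argument is the three-line computation following Eq.~(\ref{eq:Depolar}): the trace of the output depends only on the input's reduction on the untrusted side, and the depolarizing condition $\tr_A(\eta)=\tr_A(\rho)$ closes the argument. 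For a general line network I would propagate this iteratively through the chain by using the Heisenberg dual of each building block, writing
\begin{equation*}
\tr\bigl((\mathcal{A}_{\vec{b}} \otimes \tilde{\mathcal{A}}_{\vec{c}})(\cdot)\bigr) = \tr\bigl((\mathcal{A}_{\vec{b}}^{*}(\id) \otimes \tilde{\mathcal{A}}_{\vec{c}}^{*}(\id))\,\cdot\,\bigr).
\end{equation*}
The subtlety absent in the bilocality case is that a state sitting internally in the line has both of its sides acted upon by non-trivial building-block chains, so the resulting trace functional probes the joint bipartite structure of the input rather than a single marginal. Verifying that the depolarizing-noise assumption, suitably interpreted for a state with two untrusted sides, is strong enough to force $\tr\bigl((\mathcal{A}_{\vec{b}}^{*}(\id) \otimes \tilde{\mathcal{A}}_{\vec{c}}^{*}(\id))(\eta - \rho)\bigr) = 0$ is where the real work concentrates. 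Once this is in place, the normalized decomposition immediately witnesses the claimed bound.
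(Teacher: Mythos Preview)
Your proposal is correct and matches the paper's approach: the paper's argument is precisely the bilocality calculation in and immediately after Eq.~(\ref{eq:Depolar}), which it then ``formalises'' as the Observation for general line networks without supplying additional detail. Your identification of the trace-matching $\tr(\tilde{\eta}_{\vec{b},\vec{c}})=\tr(\sigma_{\vec{b},\vec{c}})$ as the crux, and of the subtlety for an interior state with two untrusted sides, is apt---the paper likewise leaves the precise meaning of ``depolarizing'' in that interior case implicit.
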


In contrast to the noise tolerance given by the robustness measure, the convex weight of a state $\rho$ indicates the fraction of resource included in the state:
\begin{align}
    \text{W}_\mathcal{F}(\rho) = \min w \geq 0 \text{ s.t. } \rho = (1-w) \tau + w \eta,
\end{align}
where the minisation is over $\tau \in \mathcal{F}$ and $\eta$ being any state.
For example, when $\mathcal{F}$ is the set of separable states, the quantity $1-\text{W}_\mathcal{F}(\rho)$ expresses how much of the mixture can be written as a separable state.

Based on the previously introduced notation (cf. Eq.~(\ref{eq:network_state})), it is intuitive that both of the presented entanglement quantifiers do not increase when placing a state in a line network.
Concretely, the following observation is obtained.

\begin{observation} \label{obs:RWstates}
Assume that each of the cones spanned by the free set $\mathcal{F}$ and the noise set $\mathcal{N}$ are closed under local operations and denote by $\sigma_{\vec{b},\vec{c}}$ the state ensemble at the trusted end-points of a line network. Then, we have the following inequalities:
\begin{enumerate}
    \item \textit{Robustness of an individual state:}
            \begin{equation}
                \sum_{\vec{b},\vec{c}}\tr(\sigma_{\vec{b},\vec{c}}) \text{R}_\mathcal{F}^\mathcal{N}(\hat{\sigma}_{\vec{b},\vec{c}})
                \leq \text{R}_\mathcal{F}^\mathcal{N}(\rho)
            \end{equation}
            where $\rho$ is a state in the network.
    \item \textit{Convex weight of all states:}
            \begin{equation}
                \sum_{\vec{b},\vec{c}}\tr(\sigma_{\vec{b},\vec{c}}) \text{W}_\mathcal{F}(\hat{\sigma}_{\vec{b},\vec{c}})
                \leq \prod_{j=1}^n \text{W}_\mathcal{F}(\rho_j)
            \end{equation}
            where $\rho_j$ are states in the network.
\end{enumerate}

\end{observation}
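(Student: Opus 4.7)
My overall strategy for both items is to push an optimal convex decomposition of the source(s) through the local network map and then renormalise to read off a valid decomposition of the endpoint ensemble element $\hat\sigma_{\vec{b},\vec{c}}$. Before either argument I would record one structural fact: summed over all outcomes, the full network map is trace preserving, because every POVM inside a building block sums to the identity and every auxiliary source has unit trace, so that $\sum_{\vec{b},\vec{c}}\tr((\mathcal{A}_{\vec{b}} \otimes \tilde{\mathcal{A}}_{\vec{c}})(\zeta)) = \tr(\zeta)$ for any input $\zeta$. This identity is what will make the trace weights on the left-hand side of both claims collapse after summation.

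For claim 1, I would start from an optimal witness $\rho = (1+r)\tau - r\eta$ with $r = \text{R}_\mathcal{F}^\mathcal{N}(\rho)$, $\tau \in \mathcal{F}$ and $\eta \in \mathcal{N}$, and apply the local operation $\mathcal{A}_{\vec{b}} \otimes \tilde{\mathcal{A}}_{\vec{c}}$ from Eq.~(\ref{eq:network_state}) to both sides. The cone-closure assumption then places the two images in $\mathrm{cone}(\mathcal{F})$ and $\mathrm{cone}(\mathcal{N})$ respectively. Writing each as trace times a unit-trace state and dividing the resulting identity by $\tr(\sigma_{\vec{b},\vec{c}})$ exhibits $\hat\sigma_{\vec{b},\vec{c}}$ as $(1+r_{\vec{b},\vec{c}})\hat{\tilde\tau}_{\vec{b},\vec{c}} - r_{\vec{b},\vec{c}}\hat{\tilde\eta}_{\vec{b},\vec{c}}$ with $r_{\vec{b},\vec{c}} = r\,\tr((\mathcal{A}_{\vec{b}} \otimes \tilde{\mathcal{A}}_{\vec{c}})(\eta))/\tr(\sigma_{\vec{b},\vec{c}})$. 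By the definition of robustness this forces $\tr(\sigma_{\vec{b},\vec{c}})\text{R}_\mathcal{F}^\mathcal{N}(\hat\sigma_{\vec{b},\vec{c}}) \leq r\,\tr((\mathcal{A}_{\vec{b}} \otimes \tilde{\mathcal{A}}_{\vec{c}})(\eta))$, and summing in $\vec{b},\vec{c}$ then collapses the right-hand side to $r$ by trace preservation.

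For claim 2 the independent-source structure enters directly: the total input to the network is the tensor product $\bigotimes_j \rho_j$. I would substitute the optimal weight decomposition $\rho_j = (1-w_j)\tau_j + w_j\eta_j$ at each site and expand, yielding $2^n$ terms. Every term other than the all-$\eta$ one contains at least one $\tau_j\in\mathcal{F}$, and since the network map factorises locally across the bipartition of that $\tau_j$, the resulting output lies in $\mathrm{cone}(\mathcal{F})$; only the unique $\prod_j w_j \cdot \bigotimes_j\eta_j$ contribution can fall outside. Denoting its endpoint image by $\tilde\zeta_{\vec{b},\vec{c}}$, this writes $\sigma_{\vec{b},\vec{c}}$ as an element of $\mathrm{cone}(\mathcal{F})$ plus $\prod_j w_j\cdot\tilde\zeta_{\vec{b},\vec{c}}$. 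Dividing by $\tr(\sigma_{\vec{b},\vec{c}})$ yields a valid weight decomposition of $\hat\sigma_{\vec{b},\vec{c}}$ with weight at most $\prod_j w_j \cdot \tr(\tilde\zeta_{\vec{b},\vec{c}})/\tr(\sigma_{\vec{b},\vec{c}})$; multiplying through and summing over outcomes collapses the $\tilde\zeta$-traces to $1$ by the preservation identity applied to the normalised state $\bigotimes_j\eta_j$, leaving the claimed $\prod_j w_j$.

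The main obstacle I anticipate is establishing ``locality across any chosen source'' cleanly: for any $j$, the full network map must factorise as a local operation across the bipartition of $\rho_j$, absorbing all other sources and measurements into the two one-sided factors. I would handle this as a short preliminary step, iterating the building-block factorisation of Eq.~(\ref{eq:Def_BuildingBlock}) so that, viewed from any chosen source, the network looks exactly like the two-source scenario already treated in Eq.~(\ref{Eq:StateTransmit}). Once this locality is in hand, both claims reduce to cone closure, linearity, and the trace-preservation identity above.
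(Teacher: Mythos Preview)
Your argument for claim~1 is essentially the same as the paper's: push an optimal robustness decomposition of $\rho$ through the local map $\mathcal{A}_{\vec{b}}\otimes\tilde{\mathcal{A}}_{\vec{c}}$, use cone closure, renormalise to bound $\text{R}_\mathcal{F}^\mathcal{N}(\hat\sigma_{\vec{b},\vec{c}})$ by $r\,\tr(\tilde\eta_{\vec{b},\vec{c}})/\tr(\sigma_{\vec{b},\vec{c}})$, then sum and invoke the trace identity $\sum_{\vec{b},\vec{c}}\tr(\tilde\eta_{\vec{b},\vec{c}})=1$.

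For claim~2 you take a genuinely different route from the paper. The paper (Appendix~\ref{app:allStates}) argues \emph{inductively along the line}: it treats the first two sources and one building block, extracts a factor $w_1w_2$ in front of a single non-free piece, observes that the result has the same shape as the input to the next building block, and iterates. You instead expand the full tensor product $\bigotimes_j\rho_j$ into $2^n$ terms in one shot and argue that any term containing at least one $\tau_{j_0}\in\mathcal{F}$ lands in $\mathrm{cone}(\mathcal{F})$, because the whole network seen from slot $j_0$ is a local map regardless of which states occupy the other slots. Your approach is shorter and makes the product $\prod_j w_j$ appear immediately; the paper's inductive version is more explicit about the building-block mechanics and recycles verbatim for the combined state--measurement bound in Appendix~\ref{app:allPOVM&States}. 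The ``locality across any chosen source'' lemma you flag as the main obstacle is exactly the content of Eq.~(\ref{eq:network_state}) applied with $\rho=\tau_{j_0}$ and the remaining (arbitrary) states absorbed into $\mathcal{A}_{\vec{b}}$, $\tilde{\mathcal{A}}_{\vec{c}}$, so it needs no new work.
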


\begin{proof}
First, proving statement 1.
Similar to the bilocality scenario, if a state $\rho$ with robustness $r$ is placed in a line network,
the prepared state assemblage $\sigma_{\vec{b},\vec{c}}$ takes the form
\begin{align} \label{eq:assemblage_decomposition}
    \sigma_{\vec{b},\vec{c}} 
    & = (\mathcal{A}_{\vec{b}} \otimes \tilde{\mathcal{A}}_{\vec{c}}) (\rho)    
    = (1+r) \tilde{\tau}_{\vec{b},\vec{c}} -r \tilde{\eta}_{\vec{b},\vec{c}}
\end{align}
where $\tilde{\tau}_{\vec{b},\vec{c}} \in \text{cone}(\mathcal{F})$, $\sum_{\vec{b},\vec{c}} \tr(\tilde{\tau}_{\vec{b},\vec{c}})=1$, $\tilde{\eta}_{\vec{b},\vec{c}} \in \text{cone}(\mathcal{N})$ and $\sum_{\vec{b},\vec{c}} \tr(\tilde{\eta}_{\vec{b},\vec{c}})=1$.
This decomposition resembles the format required by the robustness up to normalisation where $r\tr(\tilde{\eta}_{\vec{b},\vec{c}})$ takes the role of the robustness.
Thus, by dividing by the trace of the state $\sigma_{\vec{b},\vec{c}}$, this yields an upper bound on the robustness R$_\mathcal{F}^\mathcal{N}$ of the normalised prepared state $\hat{\sigma}_{\vec{b},\vec{c}}$
\begin{align} \label{eq:RobustnessBound}
   \text{R}_\mathcal{F}^\mathcal{N}(\hat{\sigma}_{\vec{b},\vec{c}}) \leq \frac{\tr(\tilde{\eta}_{\vec{b},\vec{c}})}{\tr(\sigma_{\vec{b},\vec{c}})} r \: .
\end{align}
Conversely, this can be viewed as a lower bound on the robustness $r$ of the state $\rho$.
Note that in general $\tr(\tilde{\eta}_{\vec{b},\vec{c}})$ is unknown.
However, summing over the measurement outcomes leads to the robustness bound presented previously.
Note that the presented arguments hold for every state in the line network.
Therefore, the weighted average of the robustness lower bounds the robustness of every state in the line network individually.

Secondly, consider the second statement.
Now say a state $\rho$ has convex weight $w$.
Then the state $\sigma_{\vec{b},\vec{c}}$ prepared by the line network takes the form presented in Eq.~(\ref{eq:assemblage_decomposition}) where $r$ is repalced by $-w$.
Thus, in this case the decomposition of the assemblage resembles the decomposition required by the convex weight up to normalisation where $w\tr(\tilde{\eta}_{\vec{b},\vec{c}})$ plays a similar role as the convex weight.
These arguments hold for all states in the network.
Intuitively, it is clear that only the part of the mixture of each state which is not in the free set contributes to the part of the prepared state which is not in the free set and thus, up to normalisation, the convex weight of the state assemblage is upper bound by the product of the convex weights $w_j$ of all states $\rho_j$ in the network (see Appendix~\ref{app:allStates} for the derivation).
So when taking the normalisation into account, an equation similar to Eq.~(\ref{eq:RobustnessBound}) is derived where $r$ is replaced by $\prod_{j=1}^n w_j$ and the robustness of the state on the trusted side by its convex weight.
By summing over the measurement outcomes, the previously presented average convex weight bound is retrieved.
\end{proof}

Typically, the states and measurements of the network are assumed to be independent.
However, say the untrusted components of the network have access to some (classical) shared randomness $\lambda$, which is unknown to the trusted end points.
In this case the bounds from Observation~\ref{obs:RWstates} can be rephrased in the following way:
\begin{corollary} \label{cor:RW_state}
    Keeping the assumptions from Observation~\ref{obs:RWstates} but allowing  for the states and measurements of the line network to depend on some shared randomness $\lambda$ with probability distribution $p(\lambda)$, then the previously presented bounds take the form
    \pagebreak
    \begin{enumerate}
        \item \textit{Robustness of an individual state with shared randomness:}
            \begin{equation}
                \sum_{\vec{b},\vec{c}}\tr(\sigma_{\vec{b},\vec{c}}) \text{R}_\mathcal{F}^\mathcal{N}(\hat{\sigma}_{\vec{b},\vec{c}})
                \leq \sum_\lambda p(\lambda) \text{R}_\mathcal{F}^\mathcal{N}(\rho^\lambda)
            \end{equation}
            where $\rho^\lambda$ is a state in the network depending on $\lambda$.
        \item \textit{Convex weight of all states with shared randomness:}
            \begin{equation}
                \sum_{\vec{b},\vec{c}}\tr(\sigma_{\vec{b},\vec{c}}) \text{W}_\mathcal{F}(\hat{\sigma}_{\vec{b},\vec{c}})
                \leq \sum_\lambda p(\lambda) \prod_{j=1}^n \text{W}_\mathcal{F}(\rho_j^\lambda)
            \end{equation}
            where $\rho_j^\lambda$ are states in the network depending on $\lambda$.
    \end{enumerate}
\end{corollary}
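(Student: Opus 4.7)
The plan is to reduce the claim to Observation~\ref{obs:RWstates} by conditioning on the shared randomness $\lambda$ and then invoking the convexity of the two quantifiers. Because every building block $\mathcal{E}_{B_b,\omega}$ is linear in its state and measurement arguments, the full network map is linear, so averaging over $\lambda$ commutes with the network and
\begin{equation}
    \sigma_{\vec{b},\vec{c}} = \sum_\lambda p(\lambda)\,\sigma^\lambda_{\vec{b},\vec{c}}, \qquad \sigma^\lambda_{\vec{b},\vec{c}} := (\mathcal{A}^\lambda_{\vec{b}} \otimes \tilde{\mathcal{A}}^\lambda_{\vec{c}})(\rho^\lambda),
\end{equation}
where $\sigma^\lambda_{\vec{b},\vec{c}}$ is the ensemble the trusted endpoints would see if $\lambda$ were frozen. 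Each such $\lambda$-slice is then an ordinary (no-shared-randomness) line network, so Observation~\ref{obs:RWstates} applies to it directly.

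Next, I would write $\hat{\sigma}_{\vec{b},\vec{c}}$ as a convex combination of the normalised slices $\hat{\sigma}^\lambda_{\vec{b},\vec{c}}$ with weights $p(\lambda)\tr(\sigma^\lambda_{\vec{b},\vec{c}})/\tr(\sigma_{\vec{b},\vec{c}})$ and apply the convexity of the robustness and the convex weight. Multiplying through by $\tr(\sigma_{\vec{b},\vec{c}})$ gives, for either $Q\in\{\text{R}_\mathcal{F}^\mathcal{N},\text{W}_\mathcal{F}\}$,
\begin{equation}
    \tr(\sigma_{\vec{b},\vec{c}})\,Q(\hat{\sigma}_{\vec{b},\vec{c}}) \leq \sum_\lambda p(\lambda)\,\tr(\sigma^\lambda_{\vec{b},\vec{c}})\,Q(\hat{\sigma}^\lambda_{\vec{b},\vec{c}}).
\end{equation}
Summing over $\vec{b},\vec{c}$ and invoking Observation~\ref{obs:RWstates} on each $\lambda$-slice then yields the two stated bounds, with $\text{R}_\mathcal{F}^\mathcal{N}(\rho^\lambda)$ and $\prod_j \text{W}_\mathcal{F}(\rho_j^\lambda)$ appearing after the $\lambda$-sum.

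The only nontrivial input is the convexity of the two quantifiers under the assumed cone/noise structure. This is a short check: mixing optimal decompositions $\rho_i = (1+r_i)\tau_i - r_i\eta_i$ with probabilities $q_i$ yields a valid decomposition of $\sum_i q_i \rho_i$ with parameter $\sum_i q_i r_i$, whose barycentres remain in $\mathcal{F}$ and $\mathcal{N}$ by convexity of these sets, and the analogous argument works for $\text{W}_\mathcal{F}$. Once this is in hand, the proof is just a conditioning argument on $\lambda$ stacked on Observation~\ref{obs:RWstates}, and I do not expect any further obstruction; the only subtlety is to be careful that the weighted-average form on the left is the \emph{correct} functional to convex-combine, which is exactly why Observation~\ref{obs:RWstates} is phrased with the $\tr(\sigma_{\vec{b},\vec{c}})$ prefactor in the first place.
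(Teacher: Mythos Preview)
Your proposal is correct and matches the paper's proof essentially line for line: the paper also writes $\sigma_{\vec{b},\vec{c}}=\sum_\lambda p(\lambda)\sigma_{\vec{b},\vec{c}}^\lambda$, then chains the convexity inequality $\sum_{\vec{b},\vec{c}}\tr(\sigma_{\vec{b},\vec{c}})Q(\hat{\sigma}_{\vec{b},\vec{c}})\leq\sum_{\vec{b},\vec{c}}\sum_\lambda p(\lambda)\tr(\sigma_{\vec{b},\vec{c}}^\lambda)Q(\hat{\sigma}_{\vec{b},\vec{c}}^\lambda)$ with Observation~\ref{obs:RWstates} applied to each $\lambda$-slice. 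The only difference is that you spell out the convexity of $\text{R}_\mathcal{F}^\mathcal{N}$ and $\text{W}_\mathcal{F}$ explicitly, whereas the paper leaves that first inequality unjustified.
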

\begin{proof}
    In the case of shared randomness, the prepared state ensemble takes the form
    \begin{align}
        \sigma_{\vec{b},\vec{c}}
        = \sum_\lambda p(\lambda) \left( \mathcal{A}_{\vec{b}}^\lambda \otimes \tilde{\mathcal{A}}_{\vec{c}}^\lambda\right)(\rho^\lambda)
        =: \sum_\lambda p(\lambda) \sigma_{\vec{b},\vec{c}}^\lambda
    \end{align}
    where $\rho^\lambda$ is a state in the network which depends on the shared randomness $\lambda$.
    Therefore, by Observation~\ref{obs:RWstates} the following holds for the robustness
    \begin{align}
        \sum_{\vec{b},\vec{c}}\tr(\sigma_{\vec{b},\vec{c}}) \text{R}_\mathcal{F}^\mathcal{N}(\hat{\sigma}_{\vec{b},\vec{c}})
        & \leq \sum_{\vec{b},\vec{c}} \sum_\lambda p(\lambda) \tr(\sigma_{\vec{b},\vec{c}}^\lambda) \text{R}_\mathcal{F}^\mathcal{N}(\hat{\sigma}_{\vec{b},\vec{c}}^\lambda) \nonumber\\
        & \leq \sum_\lambda p(\lambda) \text{R}_\mathcal{F}^\mathcal{N}(\rho^\lambda)
    \end{align}
    and similarly for the convex weight
    \begin{align}
        \sum_{\vec{b},\vec{c}}\tr(\sigma_{\vec{b},\vec{c}}) \text{W}_\mathcal{F}(\hat{\sigma}_{\vec{b},\vec{c}})
        & \leq \sum_{\vec{b},\vec{c}} \sum_\lambda p(\lambda) \tr(\sigma_{\vec{b},\vec{c}}^\lambda) \text{W}_\mathcal{F}(\hat{\sigma}_{\vec{b},\vec{c}}^\lambda) \nonumber\\
        & \leq \sum_\lambda p(\lambda) \prod_{j=1}^n \text{W}_\mathcal{F}(\rho_j^\lambda) \: .
    \end{align}
\end{proof}

\subsection{Certification of measurements}

Similar to the certification of the robustness measure and the convex weight for states, lower bounds on the analogues quantifiers for measurements can be derived.
The robustness of a POVM $M = \{M_i\}_i$ is defined by
\begin{align}\label{Eq:MeasRob}
    \text{R}_\mathcal{F}^\mathcal{N} (M) = \min r \geq 0 \text{ s.t. } M_i = (1 + r) P_i - r N_i \:\forall i
\end{align}
where $\{P_i\}_i$ runs over POVMs in the free set $\mathcal{F}$ and $\{N_i\}_i$ runs over POVMs in the noise set $\mathcal{N}$.
In this section only the generalised and the free robustness are considered, i.e. the noise set $\mathcal{N}$ is the set of all POVMs or equal to the free set $\mathcal{F}$, respectively.
The convex weight of a POVM $M = \{M_i\}_i$ is defined by
\begin{align}\label{Eq:MeasWei}
    \text{W}_\mathcal{F}(M) = \min w \geq 0 \text{ s.t. } M_i = (1 - w) P_i + w N_i \:\forall i
\end{align}
where $\{P_i\}_i \in \mathcal{F}$ and $\{N_i\}_i$ is a POVM.

In the following the free set $\mathcal{F}$ is chosen to be the set SEP of separable measurements, i.e. measurements of the form
\begin{equation} \label{eq:def_SepPOVM}
    M_i=\sum_a A_a^i\otimes B_a^i
\end{equation}
with positive semi-definite operators $\{A_a^i\}_{a,i}$ and $\{B_a^i\}_{a,i}$. It turns out that this definition is equivalent to a form that is more suitable for networks 
\begin{equation} \label{eq:SEP_POVM}
    \alpha^{\frac{1}{2}} \otimes \beta^{\frac{1}{2}} (\Lambda^*\otimes\Gamma^*)(M_i) \alpha^{\frac{1}{2}} \otimes \beta^{\frac{1}{2}}
    \in \text{cone(SEP)}
\end{equation}
where $\alpha$ and $\beta$ are any state, and $\Lambda$ and $\Gamma$ are any local operation, see Appendix~\ref{app:SEPequivalence}.

When certifying measurement quantifiers in our scenario, the relevant information is encoded in a set of conditional states at the end points of the network.
Thus, the free set $\mathcal{F}$, and in the case of the robustness the noise set $\mathcal{N}$, need to be translated into a set in state space.
Here, $\mathcal{F}=$ SEP.
The corresponding free set $\mathcal{F}'$ for states is thus chosen to be the set SEP of separable states. In the following, the label SEP is used for both sets, as the argument of the corresponding resource measure already indicates whether separable measurements or separable states are referenced.

\begin{observation} \label{obs:RW_meas}
    If the free set $\mathcal{F}$ is the set SEP of separable measurements and hence, the corresponding free set $\mathcal{F}'$ is the set SEP of separable states, then
    \begin{enumerate}
        \item \textit{Robustness of an individual measurement:}
                \begin{equation}
                    \sum_{i,\vec{b},\vec{c}}\tr(\sigma_{i,\vec{b},\vec{c}}) \text{R}_{\text{SEP}}^{\mathcal{N}'}(\hat{\sigma}_{i,\vec{b},\vec{c}})
                    \leq \text{R}_{\text{SEP}}^\mathcal{N}(M)
                \end{equation}
                where $M$ is a measurement in the line network, 
                and $\mathcal{N}$ and $\mathcal{N}'$ are both either the separable set or the whole set (of POVMs or states).
        \item \textit{Convex weight of the whole network:}
                \begin{align}
                    & \sum_{i,\vec{b},\vec{c}}\tr(\sigma_{i,\vec{b},\vec{c}}) \text{W}_{\text{SEP}}(\hat{\sigma}_{i,\vec{b},\vec{c}}) \nonumber\\
                    & \leq \left(\prod_{j=1}^{n+1} \text{W}_{\text{SEP}}(M^j)\right)\left(\prod_{j=1}^m\text{W}_{\text{SEP}}(\rho_j)\right)
                \end{align}
                where $M^j$ are measurements and $\rho_j$ are states in the line network.
    \end{enumerate}
\end{observation}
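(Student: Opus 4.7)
The plan is to mirror the proof of Observation~\ref{obs:RWstates} but transport everything into state space by means of Eq.~(\ref{eq:network_meas}), which writes the endpoint ensemble as the image of the inserted measurement under the purely local map
\begin{equation*}
\Psi_{\vec{b},\vec{c}} := (\mathcal{A}_{\vec{b}} \circ \mathcal{S}^*_{\omega_0}) \otimes (\tilde{\mathcal{A}}_{\vec{c}} \circ \mathcal{S}^*_{\xi_0}) \:.
\end{equation*}
The structural fact I rely on throughout is Eq.~(\ref{eq:SEP_POVM}): a local pair of $\mathcal{S}^*$-type operations followed by further local operations sends elements of $\text{cone}(\text{SEP})$ (in POVM space) into $\text{cone}(\text{SEP})$ (in state space). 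Thus $\Psi_{\vec{b},\vec{c}}$ plays, for inserted measurements, exactly the role that $\mathcal{A}_{\vec{b}} \otimes \tilde{\mathcal{A}}_{\vec{c}}$ played for inserted states in Observation~\ref{obs:RWstates}.

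For statement~1 I would start from an optimal decomposition $M_i = (1+r) P_i - r N_i$ realising $r = \text{R}_{\text{SEP}}^{\mathcal{N}}(M)$, apply $\Psi_{\vec{b},\vec{c}}$ to both sides, and invoke the structural fact above to obtain $\Psi_{\vec{b},\vec{c}}(P_i) \in \text{cone}(\text{SEP})$ and $\Psi_{\vec{b},\vec{c}}(N_i)$ in the appropriate cone for the two choices of $\mathcal{N}'$. Dividing through by $\tr(\sigma_{i,\vec{b},\vec{c}})$ yields a valid witness decomposition of $\hat{\sigma}_{i,\vec{b},\vec{c}}$ giving $\text{R}_{\text{SEP}}^{\mathcal{N}'}(\hat{\sigma}_{i,\vec{b},\vec{c}}) \leq r\,\tr(\Psi_{\vec{b},\vec{c}}(N_i))/\tr(\sigma_{i,\vec{b},\vec{c}})$, and the trace-weighted sum over $i,\vec{b},\vec{c}$ absorbs the unknown numerators into one, exactly as in Observation~\ref{obs:RWstates}.

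For statement~2 I would substitute optimal convex-weight decompositions $\rho_j = (1-w_j)\tau_j + w_j\eta_j$ and $M^j_k = (1-w^j) P^j_k + w^j N^j_k$ (with $\tau_j$ and $\{P^j_k\}$ in SEP) for every state and every measurement, and expand the resulting multilinear network functional. Each term in the expansion is labelled by a subset of resources selecting their non-free component, with coefficient a product of $w$'s and $(1-w)$'s. The central lemma to establish is a \emph{chain-breaking} statement: as soon as at least one resource is replaced by its free (SEP) part, the resulting endpoint operator is separable across the trusted cut. For a state in the interior, this follows by factorising its separable decomposition and propagating through the remaining local actions, as in Observation~\ref{obs:RWstates}. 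For a measurement in the interior, I would use Eq.~(\ref{eq:def_SepPOVM}) to expose the classical-quantum structure of a SEP POVM, which cleaves the line into two halves subsequently acted on by local one-sided maps. Consequently the only term that can contribute a non-separable component is the all-non-free term, of weight $\prod_j w^j \prod_j w_j$, which delivers the claimed product bound after the same trace-weighted sum over outcomes as in the robustness argument.

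I expect the main obstacle to lie in the chain-breaking lemma for separable measurements: one must argue carefully that inserting a SEP POVM anywhere in the line, together with the $\mathcal{S}^*$-type processing on its downstream side, still factorises into a tensor of one-sided maps on the trusted cut, and that this behaviour is compatible with \emph{all} mixed terms in the multilinear expansion. Once chain breaking is in place, the remainder is combinatorial bookkeeping of the sub-normalisations introduced by the $d$ factors in the building blocks, which is handled in the same way as in the proof of Observation~\ref{obs:RWstates}.
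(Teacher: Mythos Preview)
Your proposal is correct and matches the paper's argument essentially line for line: the paper also pushes the optimal decomposition $M_i=(1+r)P_i-rN_i$ through the local map $(\mathcal{A}_{\vec{b}}\circ\mathcal{S}^*_{\omega_0})\otimes(\tilde{\mathcal{A}}_{\vec{c}}\circ\mathcal{S}^*_{\xi_0})$, invokes Eq.~(\ref{eq:SEP_POVM}) to land the free part in $\text{cone}(\text{SEP})$, and then sums the trace-weighted bound over $i,\vec{b},\vec{c}$. For statement~2 the paper (Appendix~\ref{app:allPOVM&States}) organises the same chain-breaking argument iteratively rather than via a full multilinear expansion, but the content---a single SEP state or SEP POVM anywhere in the line forces separability of the endpoint operator---is identical to what you outline.
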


\begin{proof}
Before proving the presented results, let us consider separable measurements.
By definition, a separable POVM remains separable when sent through local channels.
Further, from Eq.~(\ref{eq:SEP_POVM}) it is clear that in a bilocality network such a POVM induces a separable state.
Therefore, since a line network can be viewed as a bilocality network with additional local operations, any separable measurement prepares a separable state ensemble.

Let us first derive the bound on the robustness.
Consider a POVM $M = \{M_i\}_i$ with robustness $r$.
If this measurement is placed in a line network, the prepared states take the form
\begin{align}
    \sigma_{i,\vec{b},\vec{c}} 
    & = (\mathcal{A}_{\vec{b}} \otimes \tilde{\mathcal{A}}_{\vec{c}})(\mathcal{S}^*_{\omega_0} \otimes \mathcal{S}^*_{\xi_0})(M_i) \nonumber\\
    & = (\mathcal{A}_{\vec{b}} \otimes \tilde{\mathcal{A}}_{\vec{c}})((1+r)\tau_i -r \eta_i) \label{eq:assemblage_POVM}
\end{align}
where $\tau_i = (\mathcal{S}^*_{\omega_0} \otimes \mathcal{S}^*_{\xi_0})(P_i) \in \text{cone(SEP)}$ and $\eta_i = (\mathcal{S}^*_{\omega_0} \otimes \mathcal{S}^*_{\xi_0})(N_i) \in \text{cone}(\mathcal{N}')$.

Hence, the state ensemble $\sigma_{i,\vec{b},\vec{c}}$ can be seen as a state ensemble prepared by a line network which is probabilistically prepared with a state from the ensemble $\{\pi_i\}_i := \{(\mathcal{S}^*_{\omega_0} \otimes \mathcal{S^*_{\xi_0}})(M_i)\}_i$.
The states $\pi_i$ in the ensemble have at most robustness $r$ when evaluated with respect to the separable set SEP and the noise set $\mathcal{N}'$.
Thus, based on Eq.~(\ref{eq:RobustnessBound}) the following holds for each state $\pi_i$ individually
\begin{equation}
    \sum_{\vec{b},\vec{c}}\tr(\sigma_{i,\vec{b},\vec{c}}) \text{R}_{\text{SEP}}^{\mathcal{N}'}(\hat{\sigma}_{i,\vec{b},\vec{c}})
    \leq \tr(\eta_i) \text{R}_{\text{SEP}}^{\mathcal{N}'}(\pi_i)
    \leq \tr(\eta_i) r \: .
\end{equation}
By summing over the remaining measurement outcome $i$, the presented bound on the robustness is derived.

In order to derive the bound on the convex weight, consider a measurement $M$ with convex weight $w$.
Then the prepared states $\sigma_{i,\vec{b},\vec{c}}$ take a similar form as for the robustness, cf. Eq.~(\ref{eq:assemblage_POVM}) where $r$ is replaced by $-w$.
Similar to the case of the robustness, this means that the network behaves like a network which was prepared with a state ensemble and hence, the bound presented for the states holds.
Furthermore, when considering the states and measurements simultaneously, it is intuitive that the bound extends to all states and all measurements (c.f. detailed derivation in Appendix~\ref{app:allPOVM&States}).
\end{proof}

Analogously to the case of states, the above Observation can be extended to allow for shared randomness. In this case the following holds:
\begin{corollary}
    While keeping the assumptions from Observation~\ref{obs:RW_meas} but allowing for the states and the measurements of the line network to depend on some shared randomness $\lambda$ with probability distribution $p(\lambda)$, the previously presented bounds take the form
    \begin{enumerate}
        \item \textit{Robustness of an individual measurement with shared randomness:}
            \begin{equation}
                \sum_{i,\vec{b},\vec{c}}\tr(\sigma_{i,\vec{b},\vec{c}}) \text{R}_{\text{SEP}}^\mathcal{N'}(\hat{\sigma}_{i,\vec{b},\vec{c}})
                \leq \sum_\lambda p(\lambda) \text{R}_{\text{SEP}}^\mathcal{N}(M^\lambda)
            \end{equation}
            where $M^\lambda$ is a measurement in the network depending on $\lambda$, and $\mathcal{N}$ and $\mathcal{N}'$ are either the separable set or the set of all POVMs and all states, respectively.
        \pagebreak
        \item \textit{Convex weight of the whole network with shared randomness:}
            \begin{align}
                & \sum_{i,\vec{b},\vec{c}}\tr(\sigma_{i,\vec{b},\vec{c}}) \text{W}_{\text{SEP}}(\hat{\sigma}_{i,\vec{b},\vec{c}}) \nonumber\\
                & \leq \sum_\lambda p(\lambda) \left(\prod_{j=1}^n \text{W}_{\text{SEP}}(M_j^\lambda)\right)\left(\prod_{j=1}^m\text{W}_{\text{SEP}}(\rho_j^\lambda)\right)
            \end{align}
            where $M_j^\lambda$ are measurements and $\rho_j^\lambda$ are states in the network depending on $\lambda$.
    \end{enumerate}
\end{corollary}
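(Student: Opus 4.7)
The plan is to mirror the strategy used for states in Corollary~\ref{cor:RW_state}: condition on the value of the shared random variable $\lambda$, apply Observation~\ref{obs:RW_meas} for each $\lambda$ separately, and then combine the resulting bounds via convexity of the resource quantifiers together with the linearity of the trace. The key observation is that with shared randomness, the prepared state ensemble splits as
\begin{align}
    \sigma_{i,\vec{b},\vec{c}}
    = \sum_\lambda p(\lambda)\, \sigma_{i,\vec{b},\vec{c}}^\lambda,
\end{align}
where $\sigma_{i,\vec{b},\vec{c}}^\lambda$ is the ensemble produced by the network when its states and measurements take the $\lambda$-conditioned values $\rho_j^\lambda$ and $M^\lambda$ (and similarly for the remaining measurements in the line).

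For the first statement (robustness of an individual measurement), I would first use convexity of the robustness on normalised states: writing $\tr(\sigma_{i,\vec{b},\vec{c}})\,\hat{\sigma}_{i,\vec{b},\vec{c}} = \sum_\lambda p(\lambda)\tr(\sigma_{i,\vec{b},\vec{c}}^\lambda)\,\hat{\sigma}_{i,\vec{b},\vec{c}}^\lambda$ and applying convexity of $\text{R}_{\text{SEP}}^{\mathcal{N}'}$ with the natural probability weights yields
\begin{align}
    \tr(\sigma_{i,\vec{b},\vec{c}})\, \text{R}_{\text{SEP}}^{\mathcal{N}'}(\hat{\sigma}_{i,\vec{b},\vec{c}})
    \leq \sum_\lambda p(\lambda)\tr(\sigma_{i,\vec{b},\vec{c}}^\lambda)\, \text{R}_{\text{SEP}}^{\mathcal{N}'}(\hat{\sigma}_{i,\vec{b},\vec{c}}^\lambda).
\end{align}
Summing over $i,\vec{b},\vec{c}$, swapping the sums, and invoking the first bound of Observation~\ref{obs:RW_meas} for each fixed $\lambda$ immediately delivers the claimed inequality.

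For the second statement (convex weight of the whole network), I would proceed identically: convexity of $\text{W}_{\text{SEP}}$ gives the analogous intermediate inequality, and then Observation~\ref{obs:RW_meas} applied to the $\lambda$-conditioned network replaces the inner sum by the product $\bigl(\prod_j \text{W}_{\text{SEP}}(M_j^\lambda)\bigr)\bigl(\prod_j \text{W}_{\text{SEP}}(\rho_j^\lambda)\bigr)$. Collecting the factors of $p(\lambda)$ outside the sum yields the desired inequality.

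The only step that deserves care is the convexity argument: the robustness and the convex weight are defined on normalised states, while the $\sigma_{i,\vec{b},\vec{c}}^\lambda$ are subnormalised. I expect this to be the main (mild) obstacle, but it is handled in the same way as in Observation~\ref{obs:RW_meas} and Corollary~\ref{cor:RW_state}, namely by absorbing the traces into the mixing weights so that the decomposition of $\hat{\sigma}_{i,\vec{b},\vec{c}}$ as a convex combination of the $\hat{\sigma}_{i,\vec{b},\vec{c}}^\lambda$ has valid probability weights $p(\lambda)\tr(\sigma_{i,\vec{b},\vec{c}}^\lambda)/\tr(\sigma_{i,\vec{b},\vec{c}})$. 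Once this is in place, the rest of the argument is bookkeeping.
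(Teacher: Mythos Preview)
Your proposal is correct and matches the paper's approach exactly: the paper simply states that the derivation follows the same arguments as Corollary~\ref{cor:RW_state}, namely decomposing $\sigma_{i,\vec{b},\vec{c}}=\sum_\lambda p(\lambda)\sigma_{i,\vec{b},\vec{c}}^\lambda$, using convexity of the quantifier to pass to the $\lambda$-conditioned terms, and then invoking Observation~\ref{obs:RW_meas} for each fixed $\lambda$. Your explicit handling of the normalisation via the weights $p(\lambda)\tr(\sigma_{i,\vec{b},\vec{c}}^\lambda)/\tr(\sigma_{i,\vec{b},\vec{c}})$ is in fact more careful than what the paper spells out.
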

The derivation follows the same arguments as Corollary~\ref{cor:RW_state}.

\section{Dimension based entanglement quantifiers}

Previously, the certification of robustness- and convex weight-based quantifiers was presented.
In the following, the focus will be placed on dimension-based entanglement quantifiers.
More specifically, we will propose a definition for the Schmidt number of bipartite POVMs as well as take advantage of the operations framework to certify the Schmidt number of states and measurements in a network.

\subsection{Schmidt number of states}

The standard dimension-based entanglement quantifier for bipartite states is the Schmidt number \cite{Terhal_2000}.
The Schmidt number SN of a bipartite state $\rho = \sum_\lambda p(\lambda) \ket{\psi_\lambda}\bra{\psi_\lambda}$ indicates the amount of levels one must be able to entangle in order to create the state as a convex mixture:
\begin{equation}\label{Eq:Snumberstate}
    \text{SN}(\rho) = \min \max_{\lambda} \text{SR}(\ket{\psi_\lambda}),
\end{equation}
where the minimisation is taken over all possible pure-state decompositions of $\rho$. The Schmidt rank SR of a pure bipartite state $\ket{\psi_\lambda}$ is the number of terms in its Schmidt decomposition or, equivalently, the matrix rank of its reduced state.

\begin{observation} \label{obs:SN_state}
    The Schmidt number of a state $\rho$ in a line network is lower bound by the Schmidt number of a state $\sigma_{\vec{b},\vec{c}}$ in the state assemblage
    \begin{equation}
        \text{SN}(\sigma_{\vec{b},\vec{c}}) \leq \text{SN}(\rho) \: \forall \vec{b},\vec{c}.
    \end{equation}
\end{observation}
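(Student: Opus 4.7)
The plan is to derive the inequality as an instance of the fact that the Schmidt number is monotone under local completely positive maps, and then invoke the network decomposition given in Eq.~(\ref{eq:network_state}). Concretely, by that equation the conditional state at the endpoints satisfies $\sigma_{\vec{b},\vec{c}} = (\mathcal{A}_{\vec{b}} \otimes \tilde{\mathcal{A}}_{\vec{c}})(\rho)$, where the two chains of building blocks $\mathcal{A}_{\vec{b}}$ and $\tilde{\mathcal{A}}_{\vec{c}}$ are completely positive maps acting on the two sides of the bipartition of $\rho$ that is mapped to the trusted-endpoint cut. Locality with respect to this cut is precisely the assumption of independent sources highlighted in the text following Eq.~(\ref{eq:Bilocality}); any potential trace increase caused by absorbing factors of $d$ into $\mathcal{B}'_{B_b}$ is irrelevant, since Schmidt number is invariant under positive rescaling.

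First I would fix an optimal pure-state decomposition $\rho = \sum_\lambda p_\lambda \ket{\psi_\lambda}\bra{\psi_\lambda}$ realising $\text{SN}(\rho) = \max_\lambda \text{SR}(\ket{\psi_\lambda}) =: k$, as in Eq.~(\ref{Eq:Snumberstate}). Then I would pick Kraus representations $\mathcal{A}_{\vec{b}}(\cdot) = \sum_{\mu} K_\mu (\cdot) K_\mu^\dagger$ and $\tilde{\mathcal{A}}_{\vec{c}}(\cdot) = \sum_{\nu} L_\nu (\cdot) L_\nu^\dagger$ and expand
\begin{equation}
    \sigma_{\vec{b},\vec{c}} = \sum_\lambda \sum_{\mu,\nu} p_\lambda\, (K_\mu \otimes L_\nu)\ket{\psi_\lambda}\bra{\psi_\lambda}(K_\mu \otimes L_\nu)^\dagger.
\end{equation}
Each non-zero term, after normalisation, is a pure state of the form $(K_\mu \otimes L_\nu)\ket{\psi_\lambda}/\|\cdot\|$. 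Writing $\ket{\psi_\lambda}$ in its Schmidt form $\sum_{i=1}^{\text{SR}(\ket{\psi_\lambda})} s_i \ket{a_i}\ket{b_i}$ immediately gives $(K_\mu \otimes L_\nu)\ket{\psi_\lambda} = \sum_i s_i (K_\mu\ket{a_i})(L_\nu\ket{b_i})$, a sum of at most $\text{SR}(\ket{\psi_\lambda}) \le k$ product vectors, so its Schmidt rank is at most $k$.

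Collecting the non-zero contributions yields a pure-state decomposition of $\sigma_{\vec{b},\vec{c}}$ (after normalising it, which leaves the Schmidt number unchanged) in which every ket has Schmidt rank at most $k$. By the definition in Eq.~(\ref{Eq:Snumberstate}) this means $\text{SN}(\sigma_{\vec{b},\vec{c}}) \le k = \text{SN}(\rho)$, which is the claim.

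I do not anticipate a hard step in this proof; the only care is in two places. First, one must verify that the maps produced by chaining building blocks $\mathcal{E}_{B_{b_j},\omega_j}$ remain local on the relevant bipartition, which is exactly the independent-sources structure of a line network and was already used throughout the preceding observations. Second, one must handle the fact that $\sigma_{\vec{b},\vec{c}}$ is generally sub-normalised (and may even be rescaled upward by the $d$ factor in $\mathcal{B}'_{B_b}$), but since the Schmidt number is defined on the normalised state and is insensitive to positive rescalings this is a triviality rather than an obstacle.
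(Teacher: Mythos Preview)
Your proposal is correct and follows essentially the same approach as the paper: start from an optimal pure-state decomposition of $\rho$, apply the local maps $\mathcal{A}_{\vec{b}}\otimes\tilde{\mathcal{A}}_{\vec{c}}$, and observe that no term in the resulting decomposition of $\sigma_{\vec{b},\vec{c}}$ can have Schmidt rank exceeding $\text{SN}(\rho)$. The only difference is that the paper simply invokes the well-known fact that local operations cannot increase the Schmidt number, whereas you spell this out explicitly via Kraus operators; your version is therefore a more detailed instantiation of the same argument rather than a different route.
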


\begin{proof}
Consider a decomposition of the state $\rho$ giving the optimal solution for Eq.~(\ref{Eq:Snumberstate}).
Then the Schmidt ranks, i.e. the local ranks, of the pure states from the decomposition are upper bounded by the Schmidt number.
Local operations can turn the pure states into (sub-normalised) mixed ones, but can not increase their Schmidt number.
Therefore, if local operations are applied to a bipartite state $\rho$, then the Schmidt number of the prepared state is upper bounded by the Schmidt number of the initial state.
This implies that the Schmidt number of each prepared state lower bounds the Schmidt number of every state in the line network.
\end{proof}

Note that if the line network is ideal except for one state, the actual Schmidt number of that particular state can be determined in a SDI manner, even if it is arbitrarily close to a state with lower Schmidt number.

\subsection{Schmidt number of measurements}

The Schmidt number of a bipartite state is a dimension-based extension of the notion of separability. We take a similar approach to defining the Schmidt number of a bipartite measurement.
From the standard definition of separability for measurements, cf. Eq.~(\ref{eq:def_SepPOVM}), it is not straight-forward to define a dimension-based quantifier. Although we concentrate on finite-dimensional Hilbert spaces, one should note that for a meaningful quantifier, the extendability to the infinite-dimensional case is desirable. More precisely, POVM elements are bounded operators and the space of bounded operators is not equipped with an inner product. Hence, there is no clear sense of direction, orthogonality or partial trace, which is required by the standard Schmidt decomposition of pure states (and by extension mixed states). This issue does not appear with trace-class operators, which form an ideal of bounded operators. We can, hence, avoid the issue by renormalising bounded operators via sandwiching them with a square root of some quantum state. This step is strictly speaking not necessary in the finite-dimensional scenario, but we include it into the definition for completeness. Also, we aim to have a quantifier that is not increased by local operations.
Hence, we choose to use the equivalent notion of separability as presented in Eq.~(\ref{eq:SEP_POVM}) as the basis of our definition.

\begin{definition}
Let $M$ be a POVM. We say that $M$ has Schmidt number $k$ if, under local operations and sandwiching with square-roots of local states, its elements map to the cone of states with at most Schmidt number $k$ and at least one element can be mapped to a state with Schmidt number $k$, up to normalisation. More formally, we have
\begin{align} \label{eq:SN_POVM_b}
     \text{SN}(M) = \max_{\alpha,\beta,i,\Lambda,\Gamma} \text{SN}\left( \frac{1}{N}\alpha^{\frac{1}{2}} \otimes \beta^{\frac{1}{2}} (\Lambda^*\otimes\Gamma^*)(M_i) \alpha^{\frac{1}{2}} \otimes \beta^{\frac{1}{2}} \right)
\end{align}
where $N = \tr\left( \alpha^{\frac{1}{2}} \otimes \beta^{\frac{1}{2}} (\Lambda^*\otimes\Gamma^*)(M_i) \alpha^{\frac{1}{2}} \otimes \beta^{\frac{1}{2}} \right)$.
\end{definition}

It turns out that the definition of the Schmidt number for measurements can be brought into a form that is more fitting for networks. We state this formally in the following Observation.

\begin{observation}
The Schmidt number of a POVM $M$ can be expressed as
\begin{align} \label{eq:SN_POVM}
     \text{SN}(M) = \max_{i,\rho} \text{SN}\left( \frac{1}{N} \tr_{BB'}\left( \left(\id^{AC} \otimes M_i^{BB'}\right) \rho^{AB:B'C}  \right) \right)
\end{align}
where $N = \tr\left( \left(\id^{AC} \otimes M_i^{BB'}\right) \rho^{AB:B'C}  \right)$ and $\rho^{AB:B'C}$ is a state which is separable in the indicated cut.
\end{observation}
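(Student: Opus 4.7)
The plan is to prove the equality by establishing both inequalities separately. The bridge between the two formulations is the generalised Choi correspondence of Eqs.~(\ref{Eq:ChoiHeisenberg})--(\ref{eq:Steering}) together with the bilocality identity Eq.~(\ref{eq:Bilocality}).

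First, for the direction RHS $\leq \mathrm{SN}(M)$, I would take any separable state $\rho^{AB:B'C}=\sum_k p_k\,\omega_k^{AB}\otimes\xi_k^{B'C}$. Linearity of the partial trace combined with Eq.~(\ref{eq:Bilocality}) gives
\begin{equation*}
\tr_{BB'}\!\left((\id^{AC}\otimes M_i^{BB'})\rho^{AB:B'C}\right)=\sum_k p_k\,(\mathcal{S}^*_{\omega_k}\otimes\mathcal{S}^*_{\xi_k})(M_i).
\end{equation*}
Rewriting each steering map via Eq.~(\ref{eq:Steering}) casts every summand in the form
\begin{equation*}
(\omega_{k,A}^{1/2}\otimes\xi_{k,C}^{1/2})\,\left[(\Lambda_{\omega_k}^*\otimes\Lambda_{\xi_k}^*)(M_i)\right]^{T_{AC}}\,(\omega_{k,A}^{1/2}\otimes\xi_{k,C}^{1/2}),
\end{equation*}
which coincides with the sandwich form of the definition Eq.~(\ref{eq:SN_POVM_b}) up to a full transpose on $AC$. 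That transpose factorises as a tensor product of local antiunitaries on the trusted sides and therefore preserves the Schmidt number. Consequently each normalised summand has SN at most $\mathrm{SN}(M)$, and the convex-roof property $\mathrm{SN}(\sum_k q_k\sigma_k)\leq \max_k \mathrm{SN}(\sigma_k)$, which follows directly from the min-over-decompositions definition, bounds the SN of the whole normalised expression by $\mathrm{SN}(M)$.

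For the converse $\mathrm{SN}(M)\leq$ RHS, I would show that every admissible tuple $(\alpha,\beta,\Lambda,\Gamma,i)$ in Eq.~(\ref{eq:SN_POVM_b}) is realised by a particular product state on the right-hand side of Eq.~(\ref{eq:SN_POVM}). The generalised Choi isomorphism~\cite{Kiukas2017,Shirokov2008} provides bipartite operators $\omega^{AB}$ with $\omega_A=\alpha$ and $\xi^{B'C}$ with $\xi_C=\beta$ whose associated Heisenberg channels satisfy $\Lambda_\omega^*=\Lambda^*$ and $\Lambda_\xi^*=\Gamma^*$. Running the previous paragraph in reverse with the single-component decomposition $\rho=\omega\otimes\xi$, which is manifestly separable across $AB{:}B'C$, identifies the operand of SN on the right-hand side of Eq.~(\ref{eq:SN_POVM}) with that of Eq.~(\ref{eq:SN_POVM_b}), again up to the SN-preserving transpose. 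Maximising over the tuple then gives $\mathrm{SN}(M)\leq\text{RHS}$.

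The step I expect to be the principal obstacle is the careful bookkeeping of the partial transposes on $AC$ that arise from the Choi convention of Eq.~(\ref{eq:Steering}): one must verify that their combined action is a genuine full transpose of the $AC$ operator, taken in the product of the eigenbases of the local marginals, and that this indeed preserves the Schmidt number. A smaller technical point is that the Choi bijection between channels and bipartite states is cleanest for full-rank marginals; degenerate $\alpha$ or $\beta$ should be handled by the standard regularisation $\alpha\to(1-\epsilon)\alpha+\epsilon\,\id/d$ together with lower semicontinuity of the Schmidt number as $\epsilon\to 0$.
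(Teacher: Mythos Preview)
Your proposal is correct and follows essentially the same two-sided feasibility argument as the paper: use the generalised Choi correspondence to pass from tuples $(\alpha,\beta,\Lambda,\Gamma)$ to product states $\omega\otimes\xi$ for one direction, and reduce general separable $\rho^{AB:B'C}$ to product terms via convexity for the other. Your treatment is in fact more explicit than the paper's, which merely asserts that product states are optimal in Eq.~(\ref{eq:SN_POVM}) and leaves the transpose bookkeeping and the full-rank regularisation implicit.
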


\begin{proof}
    One can use the generalised Choi isomorphism to show that any feasible point of each optimisation problem gives a feasible point of the other. First, any feasible point of the optimisation in Eq.~(\ref{eq:SN_POVM_b}) can be mapped to a feasible point of Eq.~(\ref{eq:SN_POVM}) by the use of Choi isomorphism for the operations present in Eq.~(\ref{eq:SN_POVM_b}). Conversely, by noting that product states in the cut $AB:B'C$ are optimal for Eq.~(\ref{eq:SN_POVM}), one can use the inverse of the generalised Choi isomorphism on these states to get a feasible point of Eq.~(\ref{eq:SN_POVM_b}).
\end{proof}

It is worth noting that one could also consider the Schmidt number of a post-measurement state as the definition of Schmidt number for POVMs when restricting to the Lüders rule. However, this would lead to considerations on localisability of the state update \cite{Beckman_2001,PhysRevA.49.4331,Fewster,Polo_G_mez_2022}. Whereas localisability leads to a complementary quantifier for implementability of bipartite measurements \cite{Pauwels_2025}, the above Observation shows that our definition avoids the localisability question. This is due to the fact that the state ensemble at the end points is independent of the used state update by the middle party. This fact is easily checked using properties of the partial trace and the fact that all state updates compatible with a given POVM consist of application of the Lüders update followed by a channel depending on the outcome \cite{Pellonpaa_2013}.

We further note that tracing out the system on which the POVM acts is desirable for Schmidt number, because keeping the additional system may increase the Schmidt number.
For example, if the measurement is a qubit Bell state measurement and the state $\rho^{AB:B'C}$ is the tensor product of two qubit Bell states.
Intuitively, the Schmidt number of the measurement should not be higher than the local dimension.
However, the total post-measurement state has Schmidt number $k=4$.
But after tracing out the measured systems the Schmidt number is indeed two.

\begin{observation}
    The Schmidt number of a measurement $M$ in a line network is lower bound by the Schmidt number of any state $\sigma_{i,\vec{b},\vec{c}}$ in the state assemblage
    \begin{equation}
        \text{SN}(\sigma_{i,\vec{b},\vec{c}}) \leq \text{SN}(M) \: \forall i,\vec{b},\vec{c}
    \end{equation}
\end{observation}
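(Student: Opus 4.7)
The plan is to exploit the equivalent formulation of the POVM Schmidt number given in Eq.~(\ref{eq:SN_POVM}) together with the monotonicity of the state Schmidt number under local completely positive maps. Using Eq.~(\ref{eq:network_meas}), the prepared assemblage decomposes as $\sigma_{i,\vec{b},\vec{c}} = (\mathcal{A}_{\vec{b}} \otimes \tilde{\mathcal{A}}_{\vec{c}})\, (\mathcal{S}^*_{\omega_0} \otimes \mathcal{S}^*_{\xi_0})(M_i)$, which suggests a two-step argument: first identify the inner factor $(\mathcal{S}^*_{\omega_0} \otimes \mathcal{S}^*_{\xi_0})(M_i)$ as a feasible object for the optimisation defining $\text{SN}(M)$, then argue that the remaining outer local map cannot increase its Schmidt number.

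For the first step, I would invoke Eq.~(\ref{eq:Bilocality}) to rewrite
\begin{equation*}
(\mathcal{S}^*_{\omega_0} \otimes \mathcal{S}^*_{\xi_0})(M_i) = \tr_{BB'}\!\left( (\id \otimes M_i \otimes \id)(\omega_0 \otimes \xi_0) \right).
\end{equation*}
Since $\omega_0 \otimes \xi_0$ is a product, hence separable across the cut $AB:B'C$, this intermediate state is a valid feasible point of the maximisation in Eq.~(\ref{eq:SN_POVM}), so its Schmidt number is upper bounded by $\text{SN}(M)$. For the second step, I would note that each building block $\mathcal{E}_{B_{b_j},\omega_j} = \mathcal{S}_{\omega_j} \circ \mathcal{B}'_{B_{b_j}}$ is completely positive as a composition of CP maps, and that the concatenations $\mathcal{A}_{\vec{b}}$ and $\tilde{\mathcal{A}}_{\vec{c}}$ act on disjoint sides of the cut separating the trusted endpoints. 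Monotonicity of the Schmidt number under local CP maps—which follows from the Kraus representation, since a local Kraus element $K_j \otimes L_k$ cannot increase the Schmidt rank of any pure state, and was already used in the proof of Observation~\ref{obs:SN_state}—then yields $\text{SN}(\sigma_{i,\vec{b},\vec{c}}) \leq \text{SN}(M)$; normalisation does not affect the Schmidt number, so the bound carries over to $\hat{\sigma}_{i,\vec{b},\vec{c}}$ as well.

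The main subtlety I expect is that the building blocks are in general not trace-preserving and can even be trace-increasing, as noted after Eq.~(\ref{eq:Def_BuildingBlock}), so $\sigma_{i,\vec{b},\vec{c}}$ need not be a normalised density operator. This requires confirming that the Kraus-based monotonicity argument extends beyond the standard trace-preserving setting, which it does because the argument only inspects the Schmidt ranks of the pure-state terms in a decomposition of the input and never relies on trace preservation of the local map. Beyond this, the proof reduces to combining two facts that have already appeared in the paper, so the reasoning should be short.
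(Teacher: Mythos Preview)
Your proposal is correct and follows essentially the same two-step structure as the paper's proof: first identify the bilocality-prepared state $(\mathcal{S}^*_{\omega_0}\otimes\mathcal{S}^*_{\xi_0})(M_i)$ as a feasible point of the maximisation in Eq.~(\ref{eq:SN_POVM}), then invoke monotonicity of the state Schmidt number under the remaining local building blocks. The paper's argument is simply a terser version of yours, and your explicit treatment of the trace-increasing subtlety is a welcome addition that the paper leaves implicit.
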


\begin{proof}
From the definition of the Schmidt number of measurements, cf. Eq.~(\ref{eq:SN_POVM}), it is apparent that the Schmidt number of the state prepared by a POVM $M$ with Schmidt number $k$ has at most Schmidt number $k$.
Since a line network only acts as local operations, the Schmidt number of the final state cannot be larger than the Schmidt number of the measurement.
\end{proof}

\section{High-dimensional one-way steering activation}

It is known that there exist states that are one-way steerable, but nevertheless can lead to network steering in both directions \cite{Jones_2021}. In other words, when only a single copy of the state is available, Bob's steering efforts could be explained by a separable state, but when two copies are available, the state changes cannot be explained using only separable resources. Here we show that a similar effect can be reached for high-dimensional steering, i.e. there are states where the steering efforts from a single Bob to Alice can be explained classically, but in a bilocality network the explanation requires high Schmidt number states and measurements.

\begin{observation} \label{obs:Activation}
    Consider the isotropic state with visibility $p$ subject to loss:
    \begin{equation} \label{eq:1waySteer}
	   \rho_{p,q} = q p \ket{\psi^+_d}\bra{\psi^+_d} + q(1-p) \frac{\id_d \otimes \id_d}{d^2} + (1-q) \frac{\id_d}{d}\otimes \ket{\varnothing}\bra{\varnothing} \: .
    \end{equation}
    This state is known to be unsteerable from Bob to Alice for $q \leq (1-p)^{d-1}$~\cite{Sekatski2023}. 
    However, when two copies of the state are put into a bilocality network, one can reach a Schmidt number $k$ ensemble for $p^2 > \frac{d(k-1)-1}{d^2-1}$ and $q >0$.
\end{observation}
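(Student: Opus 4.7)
The plan is to exhibit an explicit bilocality protocol and read off the Schmidt number of the resulting conditional state at the trusted endpoints. I place two independent copies of $\rho_{p,q}$ into a bilocality network where the middle party performs a joint measurement on $BB'$ whose outcome of interest is the projector $\ket{\psi^+_d}\bra{\psi^+_d}^{BB'}$ supported entirely on the in-band (non-vacuum) subspace $\mathbb{C}^d\otimes\mathbb{C}^d$, completed into a valid POVM by any orthogonal complement on the remainder of the Hilbert space.

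The decisive observation is that this projector is orthogonal to every tensor term containing a vacuum factor on $B$ or $B'$. Writing $\rho_{p,q} = q\,\sigma + (1-q)(\id_d/d)\otimes\ket{\varnothing}\bra{\varnothing}$ with $\sigma = p\ket{\psi^+_d}\bra{\psi^+_d} + (1-p)\id_d\otimes\id_d/d^2$, the expansion of $\rho_{p,q}^{AB}\otimes\rho_{p,q}^{B'C}$ into four terms reduces after the projection on $BB'$ to the single contribution $q^2\,\sigma^{AB}\otimes\sigma^{B'C}$. The subproblem therefore becomes entanglement swapping of two copies of the in-band isotropic state $\sigma$.

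Next, I would carry out the standard entanglement-swapping calculation, most efficiently through the generalised Choi correspondence of Section~II: the building block induced by $\ket{\psi^+_d}\bra{\psi^+_d}^{BB'}$ and $\sigma$ acts on the partner state as a depolarizing channel with parameter $p$, and since depolarizing parameters are multiplicative under composition, the conditional state on $AC$ is an isotropic state with visibility $p^2$. The outcome probability is $q^2/d^2$, strictly positive whenever $q>0$. One can verify this directly using the teleportation identity $\bra{\psi^+_d}_{BB'}(\ket{\psi^+_d}_{AB}\otimes\ket{\psi^+_d}_{B'C}) = \ket{\psi^+_d}_{AC}/d$ together with the fact that $\sigma$ has maximally mixed marginals.

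The claim then follows from the Terhal--Horodecki characterisation \cite{Terhal_2000}: an isotropic state in dimension $d$ has Schmidt number at least $k$ precisely when its visibility $v$ satisfies $v > \frac{d(k-1)-1}{d^2-1}$. Setting $v = p^2$ yields the stated threshold, and monotonicity of Schmidt number under local operations (as used in the earlier observation for states) transfers this lower bound to every state in the network. The main obstacle to watch for is ensuring that the loss parameter $q$ drops out of the Schmidt number of the conditional state; this is not algebraic luck but a direct consequence of choosing a projector orthogonal to the vacuum, which is the essential idea behind the activation.
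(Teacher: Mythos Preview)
Your proposal is correct and follows essentially the same route as the paper: the paper also places two copies of $\rho_{p,q}$ in a bilocality network with the middle POVM element $\ket{\psi^+_d}\bra{\psi^+_d}^{BB'}$, computes the conditional state to be the isotropic state of visibility $p^2$ with weight $q^2/d^2$, and extracts the threshold $p^2>\frac{d(k-1)-1}{d^2-1}$. The only cosmetic difference is that the paper applies the Schmidt-number witness $W_k=\id-\tfrac{d}{k-1}\ket{\psi^+_d}\bra{\psi^+_d}$ of Ref.~\cite{Sanpera2001} to obtain the bound, whereas you invoke the equivalent Terhal--Horodecki threshold for isotropic states directly; your explicit remark that the vacuum terms are annihilated by the in-band projector is a nice clarification of why $q$ only enters through the overall success probability.
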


\begin{proof}
    Consider a bilocality network where the untrusted measurement has POVM element $B_{b=1}=\ket{\psi^+_d}\bra{\psi^+_d}$.
    Place two copies of the state given in Eq.~(\ref{eq:1waySteer}) into this network such that each state is unsteerable from the untrusted party to the trusted party.
    Then the network prepares the state 
    \begin{align}
    \sigma_{b=1}^{AC}
    & = \left( \mathcal{S}_{\rho_{p,q}}^* \otimes \mathcal{S}_{\rho_{p,q}}^* \right) (B_{b=1}) \nonumber\\
    & = \frac{q^2}{d^2} \left(p^2 \ket{\psi^+_d}\bra{\psi^+_d}^{AC} + (1-p^2) \frac{\id^A \otimes \id^C}{d^2}\right) \: .
    \end{align}
    Note that if $q=0$, the state has trace zero. Thus, for the following step it is assumed that $q>0$.
    Next, we investigate under which condition the prepared state has Schmidt number $k$.
    In order to certify the Schmidt number of a state, Schmidt number witnesses are utilised.
    Here, the witness $ W_k = \id - \frac{d}{k-1} \ket{\psi^+_d}\bra{\psi^+_d}$~\cite{Sanpera2001} is employed.
    This witness is constructed such that $\tr(W_k\rho)<0$ implies that $\rho$ has at least Schmidt number $k$.
    Applying the Schmidt number witness $W_k$ to the state prepared by the bilocality network yields
    \begin{align}
        \tr\left(W_k^{AC} \hat{\sigma}_{b=1}^{AC} \right)
        = 1- \frac{1}{d(k-1)}\left(1+ (d^2-1) p^2\right) \: .
    \end{align}
    Therefore, if $q >0 $ and $p^2 > \frac{d(k-1)-1}{d^2-1}$, then the prepared state has at least Schmidt number $k$.
\end{proof}

For a given dimension $d$ and Schmidt number $k$, the observation yields a specific bound on the visibility $p$ and hence, on the losses $q$.
The larger the Schmidt number $k$, which one wants to certify, the higher the visibility $p$ has to be. However, there always exists a valid choice for $p$ which satisfies the inequality.
In turn the losses have to increase, i.e. $q$ has to decrease, in order to satisfy the imposed constraint to ensure that the state is unsteerable from Bob to Alice.
Due to the exponent $(d-1)$, $q$ tends to zero very quickly. However, the upper bound does not reach zero as long as $p$ is strictly smaller than one.

Unsteerable states correspond to incompatibility breaking channels, i.e. channels which cannot be certified to be entanglement preserving in a scenario where the receiver's measurement is not trusted, cf. Ref.~\cite{Heinosaari_2015,Kiukas2017,Engineer2024}.
Therefore, when choosing $k=2$, the presented example is also an example of two incompatability breaking channels being activated in a bilocality network as they demonstrate entanglement preservation.
Furthermore, if the Schmidt number $k$ is chosen to be greater than two, the channels allow for the preservation of $k$-dimensional entanglement.
Such channels are called $k$-partially entanglement breaking channels \cite{chruscinski2005partiallyentanglementbreakingchannels}. The SDI aspect and certification of partially entanglement breaking channels has formerly been discussed in Refs.~\cite{PhysRevA.107.052425,Engineer2024}
The example presented in Observation~\ref{obs:Activation} implies the following.
\begin{corollary}
    Some incompatibility breaking channels can be SDI certified to be $k$-partially entanglement breaking when two copies are put in parallel to a line network.
\end{corollary}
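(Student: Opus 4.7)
The plan is to lift Observation~\ref{obs:Activation} from the level of states to the level of channels via the generalised Choi isomorphism, and then reinterpret the resulting Schmidt-number activation as a statement about SDI-certified preservation of $k$-dimensional entanglement.

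First I would pick parameters $(p,q)$ that satisfy simultaneously $q \leq (1-p)^{d-1}$ (ensuring one-way unsteerability from Bob to Alice) and $p^2 > \frac{d(k-1)-1}{d^2-1}$ with $q>0$ (ensuring the Schmidt-number-$k$ activation of Observation~\ref{obs:Activation}). The second condition restricts only $p$ and is satisfied whenever $p$ is close enough to one, and for any such $p<1$ the upper bound $(1-p)^{d-1}$ on $q$ is strictly positive, so the two conditions are jointly feasible. Via Eq.~(\ref{Eq:ChoiHeisenberg}) I would then associate to $\rho_{p,q}$ the channel $\Lambda_{p,q}$. Unsteerability of $\rho_{p,q}$ from Bob to Alice translates, by the state--channel duality that maps unsteerable states to incompatibility breaking channels~\cite{Heinosaari_2015,Kiukas2017}, into the statement that $\Lambda_{p,q}$ is incompatibility breaking; in particular a single use of $\Lambda_{p,q}$ with an untrusted receiver cannot SDI-certify any entanglement preservation.

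Second, I would insert two independent copies of $\rho_{p,q}$ into a bilocality line network and let the untrusted middle party perform the Bell state measurement, post-selecting the outcome $b=1$ corresponding to $\ket{\psi^+_d}\bra{\psi^+_d}$. Observation~\ref{obs:Activation} then yields a conditional state at the trusted end points of Schmidt number at least $k$. Because the line network acts on the end-point ensemble purely as local operations composed with two parallel uses of $\Lambda_{p,q}$ together with one untrusted entangled measurement, and because local operations cannot increase Schmidt number, the parallel composition of two copies of $\Lambda_{p,q}$ must transmit $k$-dimensional entanglement in the sense of the paper's $k$-partial entanglement breaking terminology. No trust is placed on the middle node beyond the independence-of-sources assumption, so the certification is SDI, yielding the Corollary.

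The main obstacle is conceptual rather than computational: one must be careful that the bipartite state activation of Observation~\ref{obs:Activation} genuinely corresponds to a channel-level activation of two parallel copies of $\Lambda_{p,q}$. This is handled automatically by the generalised Choi isomorphism, which guarantees that the endpoint ensemble obtained from the two-source state picture coincides with the one obtained by sending halves of maximally entangled inputs through $\Lambda_{p,q}\otimes\Lambda_{p,q}$ in parallel and then performing the untrusted Bell state measurement in the middle.
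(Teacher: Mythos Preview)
Your proposal is correct and follows essentially the same route as the paper: the Corollary is presented there as an immediate consequence of Observation~\ref{obs:Activation} together with the known correspondence between unsteerable states and incompatibility breaking channels~\cite{Heinosaari_2015,Kiukas2017}, and you have simply spelled this out in more detail (including the joint feasibility of the parameter constraints and the channel-level reinterpretation via the generalised Choi isomorphism). The paper does not provide a separate formal proof beyond the discussion paragraph preceding the Corollary, so your write-up is, if anything, more explicit than the original.
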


\section{Conclusion and Outlook}

We have applied the inverse Choi-Jamiolkowski isomorphism to quantum networks with independent bipartite sources and bipartite measurements. The resulting operation-based formalism brings line networks into prepare-and-measure-type correlation scenarios. This allows us to quantify entanglement properties of each state and each measurement between two trusted nodes. As examples, we have demonstrated the quantification techniques for typical convex geometric measures of entanglement: robustness and convex weight.

Moreover, we have defined the concept of the Schmidt number for bipartite measurements. We have shown that in a line network it can be SDI certified, together with the Schmidt number of states. We believe this concept may be of interest when benchmarking some of the currently technologically challenging measurement devices.

We have further used our framework to show an activation result concerning one-way high-dimensional steering. Namely, we have used recently reported high-dimensional states that allow SDI Schmidt number detection in only one direction using the steering scenario \cite{Sekatski2023}. We have demonstrated how such states can nevertheless allow SDI Schmidt number certification in both directions when put into a network. 

For future works, there are various possible directions. First, in practical applications, measurements are often binarised. In our work no assumptions are made about the number of measurement outcomes.
However, if only binary measurements are implemented in a network, the derived bounds can possibly be refined such that the considered quantifier only has to be evaluated for one of the measurement outcomes.
This may reduce the experimental demands significantly when implementing, e.g., a long line network.

Second, from the presented notion of the Schmidt number for measurements, the question arises how to define the Schmidt number for operations. 
This is expected to be interesting, but also challenging due to subtleties concerning the state update after an operation, e.g. regarding the localisability \cite{Beckman_2001,PhysRevA.49.4331,Fewster,Polo_G_mez_2022}. However, note that the notions introduced here do not make any statement about the post-measurement state of the measurement effects. Hence, our quantifier is complementary to the implementation cost of bipartite operations, that has recently regained interest \cite{Pauwels_2025}.

Lastly, in the finite-dimensional case the semi-device independent assumption of bipartite steering translates into the question whether an assemblage is preparable with a separable state \cite{Kogias_2015,Moroder_2016}. However, this equivalence is known to only hold in the finite-dimensional case \cite{jokinen2023compressingcontinuousvariablequantum}. Also, in the infinite-dimensional and possibly continuous-outcome case it may be interesting to investigate whether the definition of Schmidt number of bipartite measurements can be extended in a meaningful manner, as was done for bipartite states in \cite{shirokov2011schmidtnumberpartiallyentanglement}. Hence, further investigation is required if one wishes to extend our results and concepts to the infinite-dimensional case.

\section{Acknowledgments}
We thank Pauli Jokinen and Erkka Haapasalo for pointing out references regarding the generalisation of the Choi-Jamiolkowski isomorphism to operations.
S.E. and R.U. are grateful for the financial support from the Swiss National Science Foundation (Ambizione PZ00P2- 202179) and the Wallenberg Initiative on Networks and Quantum Information (WINQ).

\appendix
\onecolumngrid

\section{Derivation of the bound on the convex weight for all states} \label{app:allStates}

In the following the bound on the convex weight for all states in a network is derived.
Say a line network is made up of measurements $B_{b_j}^j$ and states $\rho_j$ which, according to the definition of the convex weight, can each be decomposed into
\begin{equation}
    \rho_j = (1-w_j) \tau_j + w_j \eta_j,
\end{equation}
where $\tau_j$ is in the free set $\mathcal{F}$.
First, consider the subnormalised state $\sigma_{b_1}$ prepared the by the first two states
\begin{align}
    \sigma_{b_1} 
    & = (\text{id} \otimes \mathcal{E}_{B^1_{b_1},\rho_2})(\rho_1) \label{eq:sig_b1}\\
    & = w_1 (\text{id} \otimes \mathcal{E}_{B^1_{b_1},\rho_2})(\eta_1) + (1-w_1) (\text{id} \otimes \mathcal{E}_{B^1_{b_1},\rho_2})(\tau_1)\\ 
    & = w_1 w_2(\text{id} \otimes \mathcal{E}_{B^1_{b_1},\eta_2})(\eta_1) + w_1(1-w_2) (\text{id} \otimes \mathcal{E}_{B^1_{b_1},\tau_2})(\eta_1) + (1-w_1) (\text{id} \otimes \mathcal{E}_{B^1_{b_1},\rho_2})(\tau_1) \\
    & =: (w_1 w_2) \eta'_{b_1} + (\tr(\sigma_{b_1})-w_1w_2 \tr(\eta'_{b_1})) \hat{\tau}'_{b_1},
\end{align}
where $\sum_{b_1} \tr(\eta'_{b_1}) = 1$ and $\hat{\tau}'_{b_1} \in \mathcal{F}$.
Then consider the subnormalised state $\sigma_{b_1,b_2}$ prepared by the first three states
\begin{align}
    \sigma_{b_1,b_2} 
    = (\text{id} \otimes \mathcal{E}_{B^2_{b_2},\rho_3} \circ \mathcal{E}_{B^1_{b_1},\rho_2}) (\rho_1)
    = (\text{id} \otimes \mathcal{E}_{B^2_{b_2},\rho_3}) (\sigma_{b_1}) \: .
\end{align}
Notice that up to normalisation this has the same form as Eq.~(\ref{eq:sig_b1}) and thus,
\begin{align}
    \sigma_{b_1,b_2} 
    =:  (w_1 w_2 w_3) \eta''_{b_1,b_2} + (\tr(\sigma_{b_1,b_2})-w_1w_2w_3 \tr(\eta''_{b_1,b_2})) \hat{\tau}''_{b_1,b_2},
\end{align}
where $\sum_{b_1,b_2} \tr(\eta''_{b_1,b_2}) = 1$ and $\hat{\tau}''_{b_1,b_2} \in \mathcal{F}$.
This procedure is applicable to all states in the network.
Thus, the subnormalised state $\sigma_{\vec{b}}$ prepared by the whole network takes the form
\begin{align}
    \sigma_{\vec{b}} 
    & = \left(\prod_j w_j\right) \tilde{\eta}_{\vec{b}} + \left(\tr(\sigma_{\vec{b}})-\prod_j w_j \tr(\tilde{\eta}_{\vec{b}})\right) \hat{\tilde{\tau}}_{\vec{b}},
\end{align}
where $\sum_{\vec{b}} \tr(\tilde{\eta}_{\vec{b}}) = 1$ and $\hat{\tilde{\tau}}_{\vec{b}} \in \mathcal{F}$.
The decomposition resembles the decomposition required by the convex weight optimisation up to normalisation where the weight is $\prod_j w_j\tr(\tilde{\eta}_{\vec{b}})$.
Therefore, by diving by the norm of the prepared state $\sigma_{\vec{b}}$, the convex weight of the normalised state $\hat{\sigma}_{\vec{b}}$ is bound by 
\begin{equation}
    \text{W}_\mathcal{F}(\hat{\sigma}_{\vec{b}}) \leq \frac{\tr(\tilde{\eta}_{\vec{b}})}{\tr(\sigma_{\vec{b}})} \left( \prod_j w_j\right) \: .
\end{equation}
By rearranging and summing over the measurement outcomes $\vec{b}$, the bound on the convex weight for all states is derived.
Note that this bound trivially also holds for any subset of states in the network.

\section{Equivalence of definitions of separability} \label{app:SEPequivalence}

In the following, the equivalence between the standard definition of measurement separability, c.f. Eq.~(\ref{eq:def_SepPOVM}), and the expression presented in Eq.~(\ref{eq:SEP_POVM}) will be proven.

First, assume that the standard notion is satisfied.
Then, by substitution
\begin{align}
    \alpha^{\frac{1}{2}} \otimes \beta^{\frac{1}{2}}(\Lambda^*\otimes\Gamma^*) (M_i) \alpha^{\frac{1}{2}} \otimes \beta^{\frac{1}{2}}
    & = \sum_a \alpha^{\frac{1}{2}} \Lambda^*(A_a^i) \alpha^{\frac{1}{2}} \otimes \beta^{\frac{1}{2}} \Gamma^*(B_a^i) \beta^{\frac{1}{2}} 
    =: \sum_a \tilde{\alpha}_a^i \otimes \tilde{\beta}_a^i \\
    &= \tr\left( \sum_{a'} \tilde{\alpha}_{a'}^i \otimes \tilde{\beta}_{a'}^i\right) \sum_a \frac{\tr\left( \tilde{\alpha}_a^i \otimes \tilde{\beta}_a^i\right)}{\tr\left( \sum_{a'} \tilde{\alpha}_{a'}^i \otimes \tilde{\beta}_{a'}^i\right)} \frac{\tilde{\alpha}_a^i}{\tr\left( \tilde{\alpha}_a^i\right)} \otimes \frac{\tilde{\beta}_a^i}{\tr\left( \tilde{\beta}_a^i\right)} \\
    & =: \tr\left( \sum_{a'} \tilde{\alpha}_{a'}^i \otimes \tilde{\eta}_{a'}^i\right) \sum_a p(a|i) \hat{\tilde{\alpha}}_a^i \otimes \hat{\tilde{\eta}}_a^i \: .
\end{align}
Up to normalisiation, this is a separable state. Therefore, it is in the cone of separable states. 
Hence, the notion presented in Eq.~(\ref{eq:SEP_POVM}) holds if the standard definition is satisfied.

Secondly, assume that the notion from Eq.~(\ref{eq:SEP_POVM}) is satisfied.
For a state $\omega$ to be in the cone of separable states, it means that it has a decomposition of the form $\omega = \tr(\omega) \sum_\lambda p_\lambda \xi_\lambda \otimes \zeta_\lambda$.
As Eq.~(\ref{eq:SEP_POVM}) gives a state in the separable cone for any local operations $\Lambda$ and $\Gamma$, it holds especially for identity channels and we get by choosing invertible states $\alpha$ and $\beta$:
\begin{align}
    M_i
    = \tr(\omega_i) \sum_\lambda p_\lambda^i \alpha^{-\frac{1}{2}} \xi_\lambda^i \alpha^{-\frac{1}{2}} \otimes \beta^{-\frac{1}{2}} \zeta_\lambda^i \beta^{-\frac{1}{2}}
    =: \sum_\lambda (\tr(\omega_i) p_\lambda^i) A_\lambda^i \otimes B_\lambda^i \: .
\end{align}
Therefore, the standard definition is also satisfied.

This means that the two notions defining the separability of measurements are equivalent.

\section{Derivation of the bound on the convex weight for the whole network} \label{app:allPOVM&States}

In the following the bound on the convex weight of the whole network, i.e. all states and measurements, is derived.
Say a line network consists of measurements $M_{b_j}^j$ which according to the definition of the convex weight for measurements can each be decomposed into
\begin{equation}
    M_{b_j}^j = (1 - v_j) P_{b_j}^j + v_j N_{b_j}^j \:\forall b_j,
\end{equation}
where $j$ labels the measurement and $P^j$ is a separable measurement,
and states $\rho_j$ which each have the convex weight decomposition
\begin{equation}
    \rho_j = (1-w_j) \tau_j + w_j \eta_j,
\end{equation}
where $\tau_j$ is a separable state.

First, consider the subnormalised state $\sigma_{b_1}$ prepared by the first measurement $M_{b_1}^1$ and the first two states $\rho_1$ and $\rho_2$
\begin{align}
    \sigma_{b_1} 
    & = (\text{id} \otimes \mathcal{E}_{M^1_{b_1},\rho_2})(\rho_1) \label{eq:sig_b1_POVM}\\
    & = v_1 (\text{id} \otimes \mathcal{E}_{N^1_{b_1},\rho_2})(\rho_1) + (1-v_1) (\text{id} \otimes \mathcal{E}_{P^1_{b_1},\rho_2})(\rho_1) \\
    & = v_1 \left( (w_1 w_2) \eta'_{b_1} + \left(\tr((\text{id} \otimes \mathcal{E}_{N^1_{b_1},\rho_2})(\rho_1))-w_1w_2 \tr(\eta'_{b_1})\right) \hat{\tau}'_{b_1} \right) + (1-v_1) (\text{id} \otimes \mathcal{E}_{P^1_{b_1},\rho_2})(\rho_1) \\
    & =: v_1(w_1 w_2) \eta'_{b_1} + (\tr(\sigma_{b_1})-v_1w_1w_2 \tr(\eta'_{b_1})) \hat{\tilde{\tau}}_{b_1},
\end{align}
where $\sum_{b_1} \tr(\eta'_{b_1}) = 1$, $\hat{\tau}'_{b_1} \in \text{SEP}$, and $\hat{\tilde{\tau}}_{b_1} \in \text{SEP}$.
Note that the third equality holds due to the bound on the convex weight for states in a network, c.f. Appendix~\ref{app:allStates} for a detailed derivation.
Now consider the subnormalised state $\sigma_{b_1,b_2}$ prepared by the first two measurements and the first three states
\begin{align}
    \sigma_{b_1,b_2} 
    = (\text{id} \otimes \mathcal{E}_{M^2_{b_2},\rho_3} \circ \mathcal{E}_{M^1_{b_1},\rho_2}) (\rho_1)
    = (\text{id} \otimes \mathcal{E}_{M^2_{b_2},\rho_3}) (\sigma_{b_1}) \: .
\end{align}
Notice that up to normalisation this is the same form as Eq.~(\ref{eq:sig_b1_POVM}) and hence, 
\begin{align}
    \sigma_{b_1,b_2} 
    =: (v_1v_2)(w_1 w_2 w_3) \tilde{\eta}_{b_1,b_2} + (\tr(\sigma_{b_1,b_2}) -v_1v_2w_1w_2w_3 \tr(\tilde{\eta}_{b_1,b_2})) \hat{\tilde{\tau}}_{b_1,b_2}
\end{align}
where $\sum_{b_1,b_2} \tr(\tilde{\eta}_{b_1,b_2}) = 1$ and $\hat{\tilde{\tau}}_{b_1,b_2} \in \text{SEP}$.
This procedure is applicable to all measurements and states in the network.
Thus, the subnormalised state $\sigma_{\vec{b}}$ prepared by the network takes the form
\begin{align}
    \sigma_{\vec{b}} 
    & = \left(\prod_j v_j\right)\left(\prod_j w_j\right) \tilde{\eta}_{\vec{b}} + \left(\tr(\sigma_{\vec{b}})-\prod_j v_j\prod_j w_j \tr(\tilde{\eta}_{\vec{b}}) \right) \hat{\tilde{\tau}}_{\vec{b}},
\end{align}
where $\sum_{\vec{b}} \tr(\tilde{\eta}_{\vec{b}}) = 1$ and $\hat{\tilde{\tau}}_{\vec{b}} \in \text{SEP}$.
Due to the similarity of this form with the convex weight decomposition, the convex weight of the normalised state $\hat{\sigma}_{\vec{b}}$ prepared by the line network is upper bound by
\begin{equation}
    \text{W}_\mathcal{F}(\hat{\sigma}_{\vec{b}}) \leq \left( \prod_j v_j\right)\left( \prod_j w_j\right) \frac{\tr(\tilde{\eta}_{\vec{b}})}{\tr(\sigma_{\vec{b}})} \: .
\end{equation}
By rearranging and summing over the measurement outcomes $\vec{b}$, the average convex weight bound is obtained.
Note that this bound trivially holds for any subset of measurements and states of the network. A particular instance of this is the bound provided in Observation~\ref{obs:RWstates}.

\bibliographystyle{apsrev4-2}
\bibliography{references}

\end{document}